\newtheorem{Construction}{Construction}
\newtheorem{Definition}{Definition}
\newtheorem{Lemma}{Lemma}
\newtheorem{Theorem}{Theorem}
\newtheorem{Remark}{Remark}
\newenvironment{psmallmatrix}
  {\left(\begin{smallmatrix}}
  {\end{smallmatrix}\right)}
\begin{document}

\title{PMDS Array Codes With Small Sub-packetization,  Small Repair Bandwidth/Rebuilding Access
}

\author{Jie~Li,~\IEEEmembership{Member,~IEEE,}
        Xiaohu~Tang,~\IEEEmembership{Senior~Member,~IEEE,} Hanxu~Hou,~\IEEEmembership{Member,~IEEE,}\\
        Yunghsiang S.~Han,~\IEEEmembership{Fellow,~IEEE,} Bo~Bai,~\IEEEmembership{Senior~Member,~IEEE,} and Gong Zhang
\thanks{
The work of X. Tang was supported in part by the National Natural Science Foundation of China under Grant 61871331. This paper was presented in part at the 2022 IEEE
International Symposium on Information Theory \cite{li2022pmds}.}
\thanks{J. Li, H. Hou, B. Bai, and G. Zhang  are with the Theory Lab, Central Research Institute, 2012 Labs, Huawei Technologies Co., Ltd., Hong Kong SAR, China (e-mails: li.jie9@huawei.com; hou.hanxu@huawei.com; baibo8@huawei.com; nicholas.zhang@huawei.com).}
\thanks{X. Tang is Information Coding and Transmission Key Lab of Sichuan Province, CSNMT Int. Coop. Res. Centre (MoST), Southwest Jiaotong University, Chengdu, 610031, China (e-mail: xhutang@swjtu.edu.cn).}
\thanks{Yunghsiang S.~Han is with the Shenzhen Institute for Advanced Study, University of Electronic Science and Technology of China, Shenzhen, 518110, China (e-mail: yunghsiangh@gmail.com).}
}

\maketitle

\begin{abstract}
Partial maximum distance separable (PMDS) codes are a kind of erasure codes where the nodes are divided into multiple groups with each forming an MDS code with a smaller code length, thus they allow repairing a failed node with only a few helper nodes  and can correct all erasure patterns that
are information-theoretically correctable.
However, the repair of a failed node of PMDS codes still requires a large amount of communication if the group size is large.  Recently, PMDS array codes with each local code being an MSR code were introduced to reduce the repair bandwidth further. However, they require extensive rebuilding access and unavoidably a significant sub-packetization level. In this paper, we first propose two constructions of PMDS array codes with two global parities that have smaller sub-packetization levels and much smaller finite fields than the existing one. One construction can support an arbitrary number of local parities and has $(1+\epsilon)$-optimal repair bandwidth (i.e., $(1+\epsilon)$ times the optimal repair bandwidth), while the other one is limited to two local parities but has significantly smaller rebuilding access and its sub-packetization level is only $2$. In addition, we present a construction of PMDS array code with three global parities, which has a smaller sub-packetization level as well as $(1+\epsilon)$-optimal repair bandwidth, the required finite field is significantly smaller than existing ones.
\end{abstract}

\begin{IEEEkeywords}
Array codes, partial MDS codes, rebuilding access, repair bandwidth, sub-packetization.
\end{IEEEkeywords}

\section{Introduction}
\IEEEPARstart{W}ith the rapid increase in the volumes of data stored online, traditional storage techniques such as duplicating or triplicating data are not economically feasible. This has resulted in erasure coding-based distributed storage systems, which can provide reliability with low storage overhead. Previous distributed storage systems usually call upon the maximum distance separable (MDS) codes, which provide the optimal tradeoff
between fault tolerance and storage overhead. However, the downside of employing MDS codes is the excessive repair bandwidth and rebuilding access when repairing a failed node, where \textit{repair bandwidth} is defined as the amount of data downloaded from helper nodes to repair a failed node and \textit{rebuilding access} is defined as the amount of data accessed. Consider a distributed storage system that is based on an $[n,k]$ MDS code, repairing a failed node requires accessing and downloading the entire content from any $k$ surviving nodes, thus leading to a large amount of access and bandwidth.

To reduce the repair bandwidth, regenerating codes were introduced in the pioneering work \cite{dimakis2010network}, which allow for repairing a failed node by contacting more than $k$ surviving nodes but only downloading a fraction of the data stored at each node. The optimal tradeoff between the storage and repair bandwidth was also characterized in \cite{dimakis2010network}, which leads to two extremal classes of codes, namely  \emph{minimum storage regenerating (MSR)} codes and \emph{minimum bandwidth regenerating (MBR)} codes. MSR codes are a kind of MDS codes as they are optimal in terms of storage overhead, whereas MBR codes result in more storage overhead but can offer the minimum repair bandwidth. Regenerating codes including the MSR codes and MBR codes have attracted a lot of attention in the past decade \cite{rashmi2011optimal,tamo2012zigzag,papailiopoulos2013repair,li2018generic,li2016optimal,li2017generic,wang2016explicit,wang2011codes,li2015framework,ye2017explicit,ye2017explicitB,sasidharan2016explicit,goparaju2017minimum,elyasi2020cascade,balaji2018erasure,tian2014characterizing,hou2019binary,han2015update,kralevska2017hashtag,chen2020explicit}.

Although regenerating codes can significantly reduce the repair bandwidth, however, a large number of helper nodes are required to be contacted when repairing a failed node. As an alternative and parallel coding technique, locally repairable codes (LRCs) require only a few helper nodes during the repair process, however, at the cost of introducing additional redundancy to the system. Studies on the upper bound of the minimum distance of LRCs as well as the optimal constructions have also attracted a lot of attention in the past decade \cite{gopalan2012locality,kamath2014codes,rawat2014optimal,tamo2014family,wang2014repair,luo2018optimal,guruswami2019long,chen2020improved,cai2020optimal,cai2021optimal}.
In another line of research, Partial MDS (PMDS) codes \cite{blaum2013partial} provide an alternative solution, which are a strictly stronger class of LRCs as they are not only distance-optimal LRCs but can also correct any erasure pattern that is information-theoretically correctable. Formally, a $(\mu,n;r,s)$ PMDS code is a $[\mu n, \mu (n-r)-s]$ linear code, which can be partitioned into $\mu$ groups each of size $n$, such that any erasure pattern with $r$ erasures in each group plus any $s$ erasures in arbitrary places can be tolerated. Note that maximally recoverable codes are also referred to as PMDS codes \cite{chen2007maximally,gopalan2014explicit} when restricted to the RAID-type architecture. Besides, they can be applied to more topologies, e.g., see \cite{gopalan2017maximally} for grid-like topologies and \cite{shivakrishna2018maximally} for product topologies.

\begin{table*}[htbp]
\begin{center}
\caption{Comparison of key parameters among new $(\mu, n; r, s=2)$ PMDS array codes and the one in \cite{holzbaur2021partial}, where $n'\ge 2$ and we assume  $n'|n$ for convenience of notation}\label{Ta:con_r=2}
\setlength{\tabcolsep}{3.7pt}
\begin{tabular}{|c|c|c|c|c|c|c|}
  \hline
&   $r$&  $s$ & Sub-packetization level $\ell$ & Field size $q$ & Repair bandwidth $\gamma$ & Rebuilding access $\Gamma $  \\
 \hline  Construction A in \cite{holzbaur2021partial}
  &$\ge2 $ &$2$ & $r^{n}$ & $>\mu r(rn-r+n-2)$ & $\frac{\ell}{r}(n-1)$ & $\ell(n-1)$\\
  \hline Construction \ref{Con-new1} (Thm. \ref{Eqn_Thm_C1})
  &$\ge2 $ &$2$ & $r^{n'}$ & $>\mu rn'\lceil\frac{n}{rn'}\rceil$ & $(1+\frac{(\frac{n}{n'}-1)(r-1)}{n-1})\frac{\ell}{r}(n-1)$ & $\ell(n-1)$\\
\hline
     Construction \ref{Con-PMDS2} (Thms. \ref{Thm-band-local} and \ref{Thm-Con3})
  &$2 $ &$2$ & $2$ & $>\mu n$ & $\frac{3n}{2}-2$ & $\frac{3n}{2}-2$\\
  \hline
\end{tabular}
\end{center}
\end{table*}

\begin{table*}[htbp]
\begin{center}
\caption{Comparison of key parameters among new $(\mu, n; r, s=3)$ PMDS array codes and existing ones in \cite{holzbaur2021partial}, where we assume  $n'|n$ for convenience of notation}\label{Ta:conC1}
\begin{tabular}{|c|c|c|c|c|c|c|}
  \hline
&   $r$&  $s$ & Sub-packetization $\ell$ & Field size $q$ & Repair bandwidth $\gamma$ & Rebuilding access $\Gamma $  \\
\hline \hline Construction B in \cite{holzbaur2021partial}
  &$\ge2 $ &$3$ & $r^{n}$ & $\ge(rn)^{\mu(n-r)}$ & $\frac{\ell}{r}(n-1)$ & $\ell(n-1)$\\
\hline  Construction C in \cite{holzbaur2021partial}
  &$\ge2 $ &$3$ & $r^{n}$ & $\ge\max\{rn, \mu+1\}^{n-r}$ & $\frac{\ell}{r}(n-1)$ & $\ell(n-1)$\\
\hline  Construction D in \cite{holzbaur2021partial}
  &$\ge2 $ &$3$ & $r^{n}$ & $\ge rn(\mu n)^{3(r+1)-1}$ & $\frac{\ell}{r}(n-1)$ & $\ell(n-1)$\\
  \hline Construction \ref{Con-news3} (Thm. \ref{Eqn_Thm_C3})
  &$\ge2 $ &$3$ & $r^{n'}$ & $>(\mu rn'\lceil\frac{n}{rn'}\rceil+1)^3$ & $(1+\frac{(\frac{n}{n'}-1)(r-1)}{n-1})\frac{\ell}{r}(n-1)$ & $\ell(n-1)$\\
  \hline
\end{tabular}
\end{center}
\end{table*}

In general, PMDS codes are much harder to obtain than LRC codes, especially over relatively small finite fields. Existing results \cite{blaum2013partial,gopalan2014explicit,blaum2016construction,calis2016general,gabrys2019constructions,gopi2020maximally,cai2021construction,guruswami2020constructions} show that a finite field with an exponential size is required when $r\ge 2$ and the number $s$ of global parities exceeds $3$. Note that repairing a failed node of PMDS codes may still require a large amount of communication if the group size is large, as the whole content of the surviving nodes in the same group needs to be downloaded. LRCs also have the same issue. In \cite{kamath2014codes,rawat2014optimal}, the idea of using regenerating codes (including MSR codes and MBR codes) to encode the local group was first introduced to LRCs for reducing the repair bandwidth further.
Very recently, by using MSR codes to encode the local group, PMDS array codes were introduced in \cite{holzbaur2021partial}, which combines both the advantages of PMDS codes and MSR codes, i.e., 1) Can correct any erasure pattern that is information-theoretically correctable, 2) Require fewer helper nodes during the node repair process, and 3) Can further reduce the repair bandwidth. However, the explicit PMDS code constructions with a linear field size were only provided for two global parities, the constructions of PMDS array codes for more than two global parities still require a relatively large finite field.
Although the PMDS array codes in \cite{holzbaur2021partial} have the optimal repair bandwidth as each local group is encoded by an MSR code, they require large rebuilding access and unavoidably a large sub-packetization level.

In this paper, we focus on PMDS array codes with smaller sub-packetization levels and smaller field sizes than existing ones while endowing $(1+\epsilon)$-optimal repair bandwidth. More specifically, we restrict to PMDS array codes with two and three global parities, as it is promising to construct PMDS array codes with few global parities over small finite fields. We present two PMDS array codes with $s=2$ global parities and one with $s=3$ global parities. Comparisons of the key parameters among the newly proposed PMDS array codes and some existing ones are given in Tables \ref{Ta:con_r=2} and \ref{Ta:conC1} under $s=2$ and $s=3$, respectively.

From Tables \ref{Ta:con_r=2} and \ref{Ta:conC1}, we see that the new proposed PMDS array codes have the following advantages:
\begin{itemize}
\item 
The required finite fields of the two new PMDS array codes in Constructions \ref{Con-new1} and \ref{Con-PMDS2} are only a fraction of around $\frac{1}{r}$ to $\frac{1}{r^2}$  as Construction A in \cite{holzbaur2021partial}, which is based on the first MSR code construction in \cite{ye2017explicit} (i.e., each sub-stripe can be viewed as a scalar MDS code and one stripe contains $\ell$ sub-stripes) and the PMDS codes in \cite{blaum2016construction}.

\item The new PMDS array code in Construction \ref{Con-PMDS2}, which supports only two local parity nodes, has smaller rebuilding access (when normalized by the file size) and a smaller sub-packetization level compared with the code in Construction A in \cite{holzbaur2021partial} and the new code in Construction \ref{Con-new1}.

\item The required field size of the new code in Construction \ref{Con-news3} is significantly smaller than those in Constructions B-D in \cite{holzbaur2021partial} for almost all parameter ranges,
where Constructions B, C, and D in \cite{holzbaur2021partial} are obtained by combining a universal PMDS code and an MSR code with each sub-stripe being a scalar MDS code (e.g., the first MSR code construction in \cite{ye2017explicit}). More specifically, they employ the Gabidulin-code-based PMDS code in \cite{rawat2014optimal}, the linearized-RS-codes-based PMDS code in \cite{martinez2019universal}, and a generalization of the PMDS code in \cite{gabrys2019constructions} as the universal PMDS code, respectively.

\item The new codes in Constructions \ref{Con-new1} and \ref{Con-news3} can provide a flexible tradeoff between the sub-packetization level and the repair bandwidth by varying $n'$ in $[r+1, n+1)$ when the MDS array code in \cite{li2021systematic} with this flexible tradeoff is employed as the local code. The two extreme points of the tradeoff are $(\ell, \gamma)=\left(r^{r+1}, (1+\frac{(\frac{n}{r+1}-1)(r-1)}{n-1})\frac{\ell}{r}(n-1)\right)$ and $(\ell, \gamma)=\left(r^{n}, \frac{\ell}{r}(n-1)\right)$, where the latter one is also achieved by the constructions in \cite{holzbaur2021partial}. As an example, when $n=30$ and $r=2$, some of the sub-packetization levels and repair bandwidths that the new codes in Constructions \ref{Con-new1} and \ref{Con-news3} can provide are 
\begin{align*}
(\ell,\gamma)=&\left(2^3,(1+\frac{9}{29})\gamma^*\right),\left(2^5,(1+\frac{5}{29})\gamma^*\right),\\
& \left(2^6, (1+\frac{4}{29})\gamma^*\right), \left(2^{10}, (1+\frac{2}{29}\gamma^*)\right),\\
& \left(2^{30}, \gamma^*\right), 
\end{align*}
where all points except for the last one are new that can be achieved by Constructions \ref{Con-new1} and \ref{Con-news3}, and $\gamma^*=\frac{\ell}{r}(n-1)$ denotes the repair bandwidth of an $[n, n-r]$ MSR code with sub-packetization level $\ell$.

 \item When $n=n'$, i.e.,  each local group of the new Constructions \ref{Con-new1} and \ref{Con-news3} forms an MSR code, the sub-packetization level $\ell$ and repair bandwidth $\gamma$ are the same as those of Constructions B-D in \cite{holzbaur2021partial}. Nevertheless, the required field sizes of the new code Constructions \ref{Con-new1} and \ref{Con-news3} are still smaller than the PMDS array code Construction A in \cite{holzbaur2021partial} and Constructions B-D in \cite{holzbaur2021partial}, respectively, while all the other properties are the same.
\end{itemize}
Although Constructions B-D in \cite{holzbaur2021partial} require a huge finite field, it is worth pointing out that they work for all $s\ge 1$.

The rest of this paper is organized as follows. Section \ref{sec:pre} introduces some necessary preliminaries. Section \ref{sec:codes_r=2} proposes two new PMDS array code constructions with two global parities and $(1+\epsilon)$-optimal repair bandwidth. A new PMDS array code construction with three global parities and $(1+\epsilon)$-optimal repair bandwidth is presented in Section \ref{sec:code_r=3}. Finally, Section \ref{sec:Conclu} draws the conclusion.

\section{Preliminaries}\label{sec:pre}
First of all, we fix some notations used in this paper. Let $q$ be a prime power and $\mathbf{F}_q$ the finite field containing $q$ elements. For two integers $a$ and $b$ with $a<b$, denote by $[a, b)$ the set $\{a, a+1, \ldots, b-1\}$.

Let $\ell\ge 1$, for an $a\ell \times b\ell$ matrix  $A$, let $A(j)$ denote the $j$-th column of $A$, and $A(J)$ denote the sub-matrix of $A$  that formed by the columns of $A$ with indices in the set $J$ where $J\subset [0, b\ell)$. If $J=[j \ell, j \ell+1, \ldots, j \ell+\ell)$ for some $j\in [0, b)$, then we say that the sub-matrix $A(J)$ is the $j$-th \textit{thick} column of $A$ and denote it by $A[j]$. Let $A[J]$ denote the sub-matrix of $A$ formed by the thick columns of $A$ with indices indicated by $J\subset [0, b)$. Similarly, let $A^{[j]}$ and $A^{[J]}$ denote the $j$-th thick row of $A$ and the sub-matrix formed by the thick rows of $A$ with indices in the set $J$, respectively, where $j\in [0, a)$ and $J\subset [0, a)$. For $\ell$ matrices $B_0, B_1, \ldots, B_{\ell-1}$, $\mbox{blkdiag}(B_0, B_1, \ldots, B_{\ell-1})$ denotes the block diagonal matrix 
\begin{equation*}
\begin{pmatrix}
B_0&&&\\& B_1&&\\&&\ddots&\\&&& B_{\ell-1}
\end{pmatrix}.
\end{equation*}

\subsection{MDS Array Codes}
An $[n, n-r, \ell]$ linear array code $\mathcal{C}$ over $\mathbf{F}_q$ has $n-r$ information nodes and $r$ parities nodes in each codeword, with each node (or codeword symbol) being a column vector of length  $\ell$ over $\mathbf{F}_q$, where $\ell$ is referred to as the \textit{sub-packetization level}. The following lemma can be utilized to verify whether an array code is MDS.

\begin{Lemma}(\cite{ye2017explicit})\label{Le-MDSA}
For an $[n, n-r, \ell]$ array code over $\mathbf{F}_q$ admitting the following parity-check matrix
\begin{equation*}
H=\left(\begin{array}{cccc}
    A_{0,0} &  A_{0,1} & \cdots & A_{0,n-1}\\
    A_{1,0} &  A_{1,1} & \cdots & A_{1,n-1}\\
    \vdots &  \vdots & \ddots & \vdots\\
    A_{r-1,0} &  A_{r-1,1} & \cdots & A_{r-1,n-1}
\end{array}\right),
\end{equation*}
where $A_{i,j}$ is an $\ell\times \ell$ matrix over $\mathbf{F}_q$,
it is  an MDS array code if and only if
any $r\times r$ sub-block matrix $H[J]$ of $H$ is nonsingular, where $J\subset [0, n)$ and $|J|=r$.
\end{Lemma}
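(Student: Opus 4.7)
The plan is to translate the MDS property into an erasure-recovery statement and then into a nonsingularity condition on the $r\times r$ sub-block matrices of $H$. By definition, an $[n, n-r, \ell]$ array code is MDS precisely when any $n-r$ nodes determine the codeword, or equivalently, when any $r$ erased nodes can be uniquely recovered from the remaining $n-r$ nodes.

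First I would write a codeword as $\mathbf{c} = (\mathbf{c}_0^T, \mathbf{c}_1^T, \ldots, \mathbf{c}_{n-1}^T)^T$ with each $\mathbf{c}_j \in \mathbf{F}_q^\ell$, so that the parity-check equation $H\mathbf{c}=\mathbf{0}$ takes the form $\sum_{j=0}^{n-1} H[j]\,\mathbf{c}_j = \mathbf{0}$. For any $J\subset [0,n)$ with $|J|=r$, separating the erased coordinates from the surviving ones gives
$$H[J]\,\mathbf{c}_J \;=\; -\,H[[0,n)\setminus J]\,\mathbf{c}_{[0,n)\setminus J}.$$

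The forward direction then follows because MDS means that for every compatible realization of the surviving coordinates $\mathbf{c}_{[0,n)\setminus J}$ there must be exactly one $\mathbf{c}_J\in \mathbf{F}_q^{r\ell}$ consistent with the displayed system; this forces the square coefficient matrix $H[J]$ to have trivial kernel and hence to be nonsingular. Conversely, if every $r\times r$ sub-block $H[J]$ is nonsingular, then for any such $J$ one can solve explicitly $\mathbf{c}_J = -H[J]^{-1}H[[0,n)\setminus J]\,\mathbf{c}_{[0,n)\setminus J}$, so any $r$ erasures are correctable and the code is MDS.

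The main subtlety, rather than a genuine obstacle, is to verify that $H$ is indeed the parity-check matrix of a code of dimension $(n-r)\ell$, i.e., that $H$ has full row rank $r\ell$; this is automatically guaranteed the moment a single $r\times r$ sub-block $H[J]$ is nonsingular, so the hypothesis for the converse direction is self-consistent. The remaining work is pure bookkeeping: keeping the thick-column partition of $H$ aligned with the nodewise partition of $\mathbf{c}$, after which the argument reduces to one application of invertibility per erasure pattern.
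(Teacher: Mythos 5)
Your proof is correct, and it is the standard argument: translating MDS into unique erasure recovery, isolating the erased thick columns to get the square system $H[J]\,\mathbf{c}_J = -H[[0,n)\setminus J]\,\mathbf{c}_{[0,n)\setminus J}$, and observing that uniqueness of $\mathbf{c}_J$ is equivalent to nonsingularity of the $r\ell\times r\ell$ block $H[J]$. The paper simply cites this lemma from \cite{ye2017explicit} without reproducing a proof, and your argument matches the one given there.
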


In \cite{dimakis2010network}, the repair bandwidth $\gamma(d)$ of $[n, n-r, \ell]$ MDS array codes has shown to be
\begin{equation*}
  \gamma(d)\ge\frac{d\ell}{d-(n-r)+1},
\end{equation*}
where $d\in [n-r, n)$ is the number of contacted helper nodes. MDS array codes with the repair bandwidth attaining the above lower bound are said to have the \textit{optimal repair bandwidth}, and are exactly MSR codes. 
If the repair bandwidth of an MDS array code is $(1+\epsilon)$ times the above lower bound where $\epsilon<1$ is a small constant, we say that the MDS array code has \textit{$(1+\epsilon)$-optimal repair bandwidth}, which also referred to as \textit{near-optimal repair bandwidth} in \cite{rawat2018mds}.

Subsequently, the rebuilding access $\Gamma(d)$ of $[n, n-r, \ell]$ MDS array codes has shown to be
\begin{equation*}
  \Gamma(d)\ge\frac{d\ell}{d-(n-r)+1}.
\end{equation*}
MDS array codes with the rebuilding access attaining the above lower bound are said to have the \textit{optimal rebuilding access}. Clearly, MDS array codes with optimal rebuilding access will also have optimal repair bandwidth, but not vice versa.
In the literature, most of the MDS array code constructions focus on $d=n-1$ so as to maximally reduce the repair bandwidth, this is also the setting of this work unless otherwise stated.

\subsection{PMDS Array Codes}
Now we give the formal definition of PMDS array codes.
\begin{Definition}(\cite{holzbaur2021partial})\label{Def-PMDSA}
Let $\mathcal{C}$ be a $[\mu n, \mu(n-r)-s, \ell]$ linear code over $\mathbf{F}_q$, it is said to be a $(\mu, n; r, s)$ PMDS array code with sub-packetization level $\ell$ if
\begin{itemize}
    \item [i)] For any $i\in [0, \mu)$, restricting $\mathcal{C}$ to coordinates in $W_i=[ni, ni+n)$ yields an $[n, n-r, \ell]$ MDS array code over $\mathbf{F}_q$, where $\mathcal{C}|_{W_i}$ is usually referred to as the $i$-th local code and $W_i$ the $i$-th local group.
\item [ii)] For any $E_i\subset W_i$ with $|E_i|=r$,  removing the coordinates of $\mathcal{C}$ in $\bigcup_{i=0}^{\mu-1}E_i$ yields a $[\mu (n-r), \mu (n-r)-s, \ell]$ MDS array code over $\mathbf{F}_q$, i.e., the code $\mathcal{C}$ can correct up to $r$ erasures in $W_i$ for $i\in [0, \mu)$ plus $s$ erasures anywhere.
\end{itemize}
\end{Definition}
Particularly, PMDS array codes over $\mathbf{F}_q$ will be referred to as \emph{PMDS
codes} if $\ell=1$.

From Definition \ref{Def-PMDSA}, it is immediate that every $(\mu, n; r, s)$ PMDS array code with sub-packetization $\ell$ permits a form of the following parity-check matrix
\begin{equation}\label{Eqn-PC-PMDS}
 \mathcal{H}=\left(
                \begin{array}{cccc}
                  H_{0} &  &  &  \\
                   &  H_{1}&  &  \\
                   &  & \ddots &  \\
                   &  &  & H_{\mu-1} \\
                  P_0 & P_1 & \cdots  & P_{\mu-1} \\
                \end{array}
              \right),
\end{equation}
where   $H_i$ is an $r\ell\times n\ell$ matrix and $P_i$ is an $s\ell\times n\ell$ matrix, $\mu, n ,r$ and $s$ denote the number of local groups, the size of each local group (or the code length of each local code), the number of local parities in each local code, and the number of global parities, respectively.

In the following, we introduce several lemmas, which are very useful when checking i) and ii) of Definition \ref{Def-PMDSA}. The immediately following lemma helps to check i) of Definition \ref{Def-PMDSA}, i.e., each local code is MDS.

\begin{Lemma} (Block Vandermonde matrix, \cite[Lemma 14]{ye2017explicit})\label{Lemma-BlockV}
Let $B_0,\ldots, B_{r-1}$ be $\ell\times \ell$ matrices such that
$B_iB_j=B_jB_i$ and $B_i-B_j$ is nonsingular for all $i,j\in [0,r)$ with $i\ne j$, then the matrix
\begin{equation*}
    \begin{pmatrix}
         I &I &\cdots& I\\
         B_0 & B_1 &\cdots & B_{r-1}\\
         \vdots & \vdots &\ddots & \vdots\\
    B_{0}^{r-1} & B_1^{r-1} &\cdots & B_{r-1}^{r-1}\\
    \end{pmatrix}
\end{equation*}
is nonsingular.
\end{Lemma}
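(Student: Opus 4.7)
The plan is to mimic the classical argument that a scalar Vandermonde matrix with distinct entries is nonsingular, leveraging the fact that $B_0, \ldots, B_{r-1}$ commute so that polynomial identities in them behave as in the scalar case. Since the block Vandermonde matrix is square of size $r\ell \times r\ell$, I would reduce the problem to showing that its kernel is trivial, i.e., that any length-$\ell$ column vectors $v_0, \ldots, v_{r-1}$ satisfying
\[
\sum_{i=0}^{r-1} B_i^k v_i \;=\; 0 \quad\text{for every } k \in [0, r)
\]
must all vanish.

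The key tool I would construct is, for each fixed $i \in [0, r)$, the matrix Lagrange interpolation polynomial
\[
L_i(x) \;=\; \prod_{j \ne i} (B_i - B_j)^{-1} (x - B_j),
\]
which is well-defined of degree at most $r-1$ because every factor lies in the commutative $\mathbf{F}_q$-subalgebra generated by $x$ and $B_0, \ldots, B_{r-1}$; the invertibility of $B_i - B_j$ for $i \ne j$ is exactly the hypothesis that lets the denominators make sense. Expanding $L_i(x) = \sum_{k=0}^{r-1} c_{i,k}\, x^k$, the coefficient matrices $c_{i,k}$ are themselves polynomial expressions in the $B_0, \ldots, B_{r-1}$ and therefore commute with every $B_j$. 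Direct evaluation then gives $L_i(B_j) = \delta_{ij}\, I$, since when $j \ne i$ the product contains the zero factor $(B_j - B_j)$.

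Multiplying the kernel relation for each $k$ by $c_{i,k}$ on the left and summing, commutativity lets me swap $c_{i,k}$ past $B_j^k$ to obtain
\[
0 \;=\; \sum_{k=0}^{r-1} c_{i,k}\!\left(\sum_{j=0}^{r-1} B_j^k v_j\right) \;=\; \sum_{j=0}^{r-1} L_i(B_j)\, v_j \;=\; v_i,
\]
which forces $v_i = 0$ for every $i$ and completes the proof. As an alternative derivation, I could instead induct on $r$ by block-row reduction: subtracting $B_{r-1}$ times block row $k$ from block row $k+1$ converts its $i$-th block entry into $B_i^{k}(B_i - B_{r-1})$ (again using commutativity), killing the last block column below the top and factoring off an invertible block-diagonal matrix from the remaining columns, which shrinks the problem to the $(r-1)$-column case.

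The only genuinely delicate point will be rigorously justifying the identity $\sum_{k} c_{i,k} B_j^k = L_i(B_j)$ and the construction of $L_i(x)$ itself: both steps implicitly work inside the commutative subalgebra generated by $B_0, \ldots, B_{r-1}$, and without the commutativity hypothesis neither the product $\prod_{j \ne i}(B_i - B_j)^{-1}(x - B_j)$ nor the substitution $x \mapsto B_j$ would produce a well-defined matrix. Once that bookkeeping is spelled out, the remaining argument is purely mechanical.
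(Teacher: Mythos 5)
The paper does not supply its own proof of this lemma; it is invoked as a known result, citing \cite[Lemma 14]{ye2017explicit}. Your Lagrange-interpolation argument is correct and self-contained. The one point worth making explicit is why $(B_i-B_j)^{-1}$ commutes with every $B_k$: since $B_i-B_j$ commutes with $B_k$ and is invertible, conjugating $B_k$ by $(B_i-B_j)^{-1}$ leaves it fixed, so the inverse commutes as well. With that in hand, the coefficient matrices $c_{i,k}$ all lie in a commutative subring of $M_\ell(\mathbf{F}_q)$ that also contains each $B_j$, so the substitution $x\mapsto B_j$ is a ring homomorphism on that subring and $L_i(B_j)=\delta_{ij}I$ follows exactly as in the scalar case, giving $v_i=0$ for every $i$ and hence a trivial kernel. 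Your alternative derivation by block-row elimination and induction on $r$ is also valid and is in fact the route taken in the cited reference; the interpolation argument you feature is a slightly more conceptual repackaging of the same commutativity-driven reduction to the scalar Vandermonde fact, so either approach is acceptable here.
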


The following two lemmas help to verify the non-singularity of some key matrices when checking whether the array codes proposed in Section \ref{sec:codes_r=2_C1} and Section \ref{sec:code_r=3} satisfy ii) of Definition \ref{Def-PMDSA}, respectively.
\begin{Lemma}(\cite[Lemma 2]{gopi2020maximally})\label{Lemma-Det}
Let $C_0, \ldots, C_{s-1}$ be $r\times (r+1)$ matrices and $D_0, \ldots, D_{s-1}$ be $s\times (r+1)$ matrices,  and let $D_i^{(j)}$ be the $j$-th row of $D_i$. Then
\begin{align*}
    &(-1)^{\frac{rs(s-1)}{2}}\det\left(\begin{array}{cccc}
        C_0 & \mathbf{0} & \cdots & \mathbf{0}  \\
       \mathbf{0} &  C_1 & \cdots & \mathbf{0}  \\
       \vdots &  \vdots & \ddots & \vdots  \\
       \mathbf{0} &  \mathbf{0} & \cdots & C_{s-1}  \\
        D_0 &  D_1 & \cdots & D_{s-1}  \\
    \end{array}\right)\\
    =&\det\left(\begin{array}{ccc}
         \det\left(\begin{array}{c}
              C_0  \\
              D_0^{(0)}
         \end{array}\right)& \cdots & \det\left(\begin{array}{c}
              C_{s-1}  \\
              D_{s-1}^{(0)}
         \end{array}\right)\\
         \vdots& \ddots& \vdots\\
        \det\left(\begin{array}{c}
              C_0  \\
              D_0^{(s-1)}
         \end{array}\right)& \cdots & \det\left(\begin{array}{c}
              C_{s-1}  \\
              D_{s-1}^{(s-1)}
         \end{array}\right)
    \end{array}\right).
\end{align*}
\end{Lemma}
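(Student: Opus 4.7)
The plan is to apply a generalized Laplace expansion along the first $rs$ rows of the left-hand block matrix, then rearrange and recognize the result via the Leibniz formula for the right-hand $s\times s$ determinant.

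First, I would expand the left-hand determinant along its top $rs$ rows. Because these rows have a block-diagonal pattern with blocks $C_i$ of size $r\times(r+1)$, any nonzero top-minor forces the chosen column set to include exactly $r$ out of $r+1$ columns from each of the $s$ blocks. Equivalently, one parameterizes the choice by a ``skipped-column'' tuple $\vec c=(c_0,\ldots,c_{s-1})$ with $c_i\in [0,r+1)$. For each $\vec c$, the top minor factorizes as $\prod_i\det(C_i^{(-c_i)})$, where $C_i^{(-c_i)}$ denotes $C_i$ with column $c_i$ deleted, and the complementary cofactor formed by the last $s$ rows is, up to a Laplace sign, $\det[D_0(c_0),D_1(c_1),\ldots,D_{s-1}(c_{s-1})]$ where $D_i(c_i)$ denotes the $c_i$-th column of $D_i$.

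Second, I would expand this $s\times s$ complementary determinant using the Leibniz formula $\sum_{\pi\in S_s}\mathrm{sgn}(\pi)\prod_j D_j^{(\pi(j))}(c_j)$, and swap the order of summation so that the outer sum is over $\pi$ while the inner sum over $\vec c$ factorizes across $i$. For fixed $\pi$ and $i$, the inner sum $\sum_{c_i=0}^{r}(\pm)\det(C_i^{(-c_i)})\,D_i^{(\pi(i))}(c_i)$ is, up to a uniform sign, exactly the cofactor expansion along the last row of the $(r+1)\times(r+1)$ matrix formed by stacking $C_i$ over $D_i^{(\pi(i))}$. Collecting these, the left-hand side reduces to $\pm\sum_\pi\mathrm{sgn}(\pi)\prod_i\det\bigl(\begin{smallmatrix}C_i\\ D_i^{(\pi(i))}\end{smallmatrix}\bigr)$, which is precisely the Leibniz expansion of the $s\times s$ determinant on the right, up to that overall sign.

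The main obstacle is sign bookkeeping. The outer Laplace step contributes a sign depending on $\vec c$, and each inner ``expand-along-the-last-row'' step contributes another sign depending on $c_i$; these per-$c_i$ signs must combine so that the total collapses to a single constant independent of $\vec c$ and $\pi$. A clean way to carry this out is to first apply a deterministic column permutation moving the $c_i$-th column of block $i$ to position $rs+i$: this recasts the outer Laplace sign as one explicit permutation sign, after which the remaining matrix has a $2\times 2$ block form amenable to the standard block determinant formula. Verifying that the accumulated parity is exactly $rs(s-1)/2$ modulo $2$ is then a direct combinatorial count; a sanity check at $s=2$, $r=1$ confirms the formula, since there $rs(s-1)/2=1$ and one readily verifies $\det(\mathrm{LHS})=-\det(\mathrm{RHS})$ by brute expansion.
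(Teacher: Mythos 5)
The paper does not give its own proof of this lemma; it is quoted verbatim as Lemma~2 of \cite{gopi2020maximally}, so there is no in-paper argument to compare against. Assessed on its own terms, your plan is correct and is the natural route. The generalized Laplace expansion along the top $rs$ rows indeed forces the chosen column set to take exactly $r$ of the $r+1$ columns from each block (else the top minor vanishes), the top minor factors as $\prod_i\det(C_i^{(-c_i)})$ by block-diagonality, and the complementary minor is $\det[D_0(c_0),\ldots,D_{s-1}(c_{s-1})]$. Leibniz-expanding that $s\times s$ determinant, interchanging the $\pi$- and $\vec c$-sums, and reassembling the inner $c_i$-sums into last-row cofactor expansions of $\det\bigl(\begin{smallmatrix}C_i\\D_i^{(\pi(i))}\end{smallmatrix}\bigr)$ is exactly right; the cancellation you flag does happen, since the Laplace sign carries a factor $(-1)^{\sum_i c_i}$ with its remaining part $\vec c$-independent, and forming the $s$ cofactor expansions contributes $\prod_i(-1)^{r+c_i}=(-1)^{rs+\sum_i c_i}$. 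The deferred sign arithmetic also works out: writing $m=rs$, the residual exponent minus $rs(s-1)/2$ reduces modulo~$2$ to $m(m+1)+s(s+1)+rs(s+1)$, which is even since each summand is, or is a multiple of, a product of two consecutive integers; your $(r,s)=(1,2)$ sanity check is consistent with this. The write-up leaves that arithmetic unexecuted, but the outline is complete and sound.
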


\begin{Lemma}(\cite[Lemma 5]{gopi2020maximally})\label{Lemma-Det2}
Let $C_0$ be an $r\times (r+1)$ matrix, $C_1$ be an $r\times (r+2)$ matrix, $D_0$ be a $3\times (r+1)$ matrix and $D_1$ be a $3\times (r+2)$ matrix, and let $D_i^{(j)}$ be the $j$-th row of $D_i$. Then
\begin{align*}
&\det\begin{pmatrix}
     C_0&0\\0&C_1\\D_0&D_1
\end{pmatrix}=0\Longleftrightarrow
\det\begin{pmatrix}
C_0\\D_0^{(0)}
\end{pmatrix}\det\begin{pmatrix}
C_1\\D_1^{(1)}\\D_1^{(2)}
\end{pmatrix}\\
& -\det\begin{pmatrix}
C_0\\D_0^{(1)}
\end{pmatrix}\det\begin{pmatrix}
C_1\\D_1^{(0)}\\D_1^{(2)}
\end{pmatrix}\hspace{-.5mm} +\hspace{-.5mm}\det\begin{pmatrix}
C_0\\D_0^{(2)}
\end{pmatrix}\det\begin{pmatrix}
C_1\\D_1^{(0)}\\D_1^{(1)}
\end{pmatrix}\hspace{-1mm}=0.
\end{align*}
\end{Lemma}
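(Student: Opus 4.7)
My plan is to compute the determinant of the block matrix on the left (call it $M$, which is square of order $2r+3$) by the generalized Laplace expansion along its first $r+1$ columns and to show that the result is, up to an overall sign, exactly the alternating three-term expression appearing in the statement; the claimed equivalence then drops out immediately.

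The first step is to exploit the block-zero structure. In the first $r+1$ columns the middle block of $r$ rows (the strip occupied by $C_1$ on the right) is identically zero, while in the last $r+2$ columns the top block of $r$ rows (the strip occupied by $C_0$ on the left) is identically zero. Hence for an $(r+1)$-subset $R$ of rows to produce a nonzero left Laplace minor, $R$ must be disjoint from the middle strip, and for its complement $R^c$ (of size $r+2$) to produce a nonzero complementary minor, $R^c$ must be disjoint from the top strip. Since the top strip has exactly $r$ rows and $|R|=r+1$, this forces $R$ to comprise all $r$ rows of the top strip together with exactly one of the three rows of $(D_0\mid D_1)$. Hence only three Laplace terms survive, indexed by $i\in\{0,1,2\}$.

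In the term indexed by $i$, the left minor evaluates to $\det\begin{pmatrix}C_0\\D_0^{(i)}\end{pmatrix}$, while the complementary right minor, formed by the $r$ rows of $C_1$ together with the remaining two rows of $D_1$ in increasing order, evaluates to $\det\begin{pmatrix}C_1\\D_1^{(j)}\\D_1^{(k)}\end{pmatrix}$ where $\{j,k\}=\{0,1,2\}\setminus\{i\}$ and $j<k$. The attached Laplace sign is $(-1)^{I(R)+J}$, where $J=1+2+\cdots+(r+1)$ is constant and $I(R)$ is the sum of the row positions in $R$; as $i$ increases by $1$, $I(R)$ also increases by $1$, so the three signs alternate as $+,-,+$ up to a common overall factor. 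Collecting the three surviving contributions reproduces (up to a nonzero constant) the bracketed alternating sum of the lemma, which yields the stated equivalence. The only delicate point in the whole argument is pinning down the alternating sign pattern in this final step; everything else is a mechanical consequence of the block-zero pattern together with standard Laplace expansion.
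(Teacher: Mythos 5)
Your argument is correct. The matrix is square of size $(2r+3)\times(2r+3)$, the generalized Laplace expansion along the first $r+1$ columns applies, the block-zero structure correctly forces every surviving row subset $R$ to contain all $r$ rows of $C_0$ plus exactly one row of $(D_0\mid D_1)$, and the sign bookkeeping $(-1)^{s(R)+s(J)}$ with $s(J)=1+\cdots+(r+1)$ fixed and $s(R)$ incrementing by one as $i$ runs over $\{0,1,2\}$ does produce the $+,-,+$ alternation up to a global nonzero constant, yielding the stated equivalence. The paper itself does not prove this lemma but simply cites it from \cite{gopi2020maximally}; the Laplace expansion you use is the natural (and, in that reference, the actual) route, just as it is for the companion Lemma~\ref{Lemma-Det}, so there is nothing genuinely different in approach here. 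The one place you flagged as ``delicate'' -- the sign -- is indeed the only place one could slip, and your treatment of it is right; if you want to make it airtight you could record the common prefactor explicitly as $(-1)^{(r+1)^2+1}$, but for an ``$=0\iff=0$'' statement this is immaterial.
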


The following lemma helps to calculate some key determinants involved in the RHS of the formulas in Lemmas \ref{Lemma-Det} and \ref{Lemma-Det2} when checking whether the array code proposed in Section \ref{sec:code_r=3} satisfies ii) of Definition \ref{Def-PMDSA}.
\begin{Lemma} (Cauchy--Vandermonde matrix, \cite[Proposition 4.1]{martinez1998fast})\label{Le_det_CV}
Let
\begin{equation*}
V\hspace{-.5mm}=\hspace{-.5mm}\begin{pmatrix}
\frac{1}{c_0-d_0} & \cdots & \frac{1}{c_0-d_{l-1}} &1 & c_0 & \cdots & c_0^{n-l-1}\\
\frac{1}{c_1-d_0} & \cdots & \frac{1}{c_1-d_{l-1}} &1 & c_1 & \cdots & c_1^{n-l-1}\\
\vdots & \ddots & \vdots & \vdots & \vdots & \ddots & \vdots\\
\frac{1}{c_{n-1}-d_0} & \cdots & \frac{1}{c_{n-1}-d_{l-1}} &1 & c_{n-1} & \cdots & c_{n-1}^{n-l-1}\\
\end{pmatrix}
\end{equation*}
be a Cauchy--Vandermonde matrix, then
\begin{equation*}
\det(V)=\frac{\left(\prod_{0\le i<j<n}(c_j-c_i)\right)\left(\prod_{0\le i<j<l}(d_i-d_j)\right)}{\prod_{0\le i< n, 0\le j< l}(c_i-d_j)}.
\end{equation*}
\end{Lemma}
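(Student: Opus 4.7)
My plan is to reduce the Cauchy--Vandermonde determinant to an ordinary Vandermonde determinant in two steps: first a row scaling that clears the Cauchy denominators, then a polynomial change of basis that factors the result through the full $n\times n$ Vandermonde matrix on $c_0,\dots,c_{n-1}$.

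First, I would multiply row $i$ of $V$ by $\pi(c_i):=\prod_{j=0}^{l-1}(c_i-d_j)$ for each $i\in[0,n)$. This multiplies $\det(V)$ by $\prod_{0\le i<n,\,0\le j<l}(c_i-d_j)$ and produces a matrix $\widetilde V$ whose $(i,j)$-entry is $p_j(c_i)$, where
\[
p_j(x)=\prod_{j'\neq j}(x-d_{j'})\ \text{for } j<l,\qquad p_j(x)=x^{j-l}\pi(x)\ \text{for } j\ge l.
\]
Expanding $p_j(x)=\sum_{k=0}^{n-1}T_{kj}x^k$ in the monomial basis gives the factorization $\widetilde V=U\,T$, where $U=(c_i^{\,k})_{i,k}$ is the $n\times n$ Vandermonde matrix on $c_0,\dots,c_{n-1}$. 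Hence
\[
\det(V)\cdot\prod_{i,j}(c_i-d_j)=\Bigl(\prod_{0\le i<j<n}(c_j-c_i)\Bigr)\det(T),
\]
and it remains only to compute $\det(T)$.

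Next I would exploit the block structure of $T$. Since $\deg p_j\le l-1$ for $j<l$ and $p_j$ is monic of degree $j$ for $j\ge l$, the matrix $T$ is block upper-triangular with an $l\times l$ top-left block $T_{\mathrm{top}}$ and an $(n-l)\times(n-l)$ bottom-right block $T_{\mathrm{bot}}$; moreover $T_{\mathrm{bot}}$ is itself upper-triangular with $1$'s on the diagonal, so $\det(T)=\det(T_{\mathrm{top}})$. To compute $\det(T_{\mathrm{top}})$, I would use the Lagrange-type identity $p_j(d_k)=\delta_{jk}\prod_{j'\neq j}(d_j-d_{j'})$, which translates (by evaluating all $p_j$'s at $d_0,\dots,d_{l-1}$) to $W\,T_{\mathrm{top}}=\mathrm{diag}\bigl(\prod_{j'\neq j}(d_j-d_{j'})\bigr)_j$ with $W=(d_k^{\,m})_{k,m}$ the $l\times l$ Vandermonde in the $d_j$'s. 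Dividing determinants and collapsing the ordered-pair product then yields $\det(T_{\mathrm{top}})=\prod_{0\le i<j<l}(d_i-d_j)$; plugging this back and dividing by $\prod_{i,j}(c_i-d_j)$ gives the claimed formula.

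The main obstacle will be the sign bookkeeping in the final simplification: the ordered-pair product $\prod_j\prod_{j'\neq j}(d_j-d_{j'})$ contributes a factor $(-1)^{l(l-1)/2}$ that must cancel precisely with the sign relating $\det(W)=\prod_{i<j}(d_j-d_i)$ to the target form $\prod_{i<j}(d_i-d_j)$. Once this cancellation is verified, every remaining step is a routine polynomial manipulation.
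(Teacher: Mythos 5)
The paper does not prove this lemma itself; it simply cites \cite[Prop.~4.1]{martinez1998fast}, so there is no in-paper proof to compare against. Your argument is nonetheless correct and complete: clearing denominators row by row turns $V$ into $\widetilde V=UT$ with $U$ the $n\times n$ Vandermonde on the $c_i$, the degree bookkeeping on the polynomials $p_j$ makes $T$ block upper-triangular with $\det(T_{\mathrm{bot}})=1$, the evaluation identity $p_j(d_m)=\delta_{jm}\prod_{j'\neq j}(d_j-d_{j'})$ gives $WT_{\mathrm{top}}=\mathrm{diag}\bigl(\prod_{j'\neq j}(d_j-d_{j'})\bigr)$, and the sign cancellation you flag works out exactly: $\prod_j\prod_{j'\neq j}(d_j-d_{j'})=(-1)^{l(l-1)/2}\prod_{i<j}(d_j-d_i)^2$, so dividing by $\det W=\prod_{i<j}(d_j-d_i)$ leaves $\prod_{i<j}(d_i-d_j)$, matching the stated form. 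This is essentially the standard linear-algebraic derivation (polynomial change of basis through the full Vandermonde, with the Cauchy block handled by Lagrange-type evaluation at the poles), and it reproduces the cited formula with no gaps.
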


\subsection{Partition of Basis $\{e_0,\cdots,e_{N-1}\}$}\label{subsection:partition}

In this subsection, we revisit a series of particular partitions of a basis set that was proposed in \cite{li2015framework} and \cite{li2021systematic}, which will facilitate the understanding of the new constructions in this paper.

For any two integers  $r,m\ge2$, let $e_0,\cdots,e_{r^m-1}$ be a basis of $\mathbf{F}_q^{r^m}$. For example,
they can be simply set as the standard basis, i.e.,
\begin{equation*}
    e_i=(0,\cdots,0,1,0,\cdots,0),\,\,i\in [0, r^m),
\end{equation*}
where only the $i$-th entry is $1$.

For consistency, we follow the notation in  \cite{li2015framework,li2021systematic}.
Given an integer $0\le a<r^m$, denote by $(a_0,\cdots,a_{m-1})$
its $r$-ary expansion, i.e., $a=\sum\limits_{j=0}^{m-1}r^{m-1-j}a_{j}$.
For $0\le i< m$ and $0\le t<r$, define a subset of $\{e_0,\cdots,e_{r^m-1}\}$ as
\begin{equation}\label{Eqn_Vt}
V_{i,t}=\{e_a|a_i=t, 0\le a< r^m\},
\end{equation}
where $a_i$ is the $i$-th element in the $r$-ary expansion of $a$.

Straightforwardly, $|V_{i,t}|=r^{m-1}$, and  $\{V_{i,0},V_{i,1},\cdots,V_{i,r-1}\}$ is a partition of the set $\{e_0,\cdots,e_{r^m-1}\}$ for any $i\in [0,m)$.
Table \ref{example partition} gives two examples of the set partitions  defined in \eqref{Eqn_Vt}.
{\small
\begin{table}[htbp]
\begin{center}
\caption{(a) and (b) denote the $m$  partitions of  the set $\{e_0,\cdots,e_{r^m-1}\}$    defined by \eqref{Eqn_Vt} for $m=3,r=2$, and $m=2,r=3$, respectively.}
\label{example partition}\begin{tabular}{|c|c|c|c|c|c|c|c|}
\hline $i$ & 0 & 1 & 2 & $i$ & 0 & 1 & 2\\
\hline \multirow{4}{*}{$V_{i,0}$ }&$e_0$&$e_0$&$e_0$&\multirow{4}{*}{$V_{i,1}$ }&$e_4$&$e_2$&$e_1$\\
  & $e_1$&$e_1$&$e_2$ && $e_5$&$e_3$&$e_3$\\
  &$e_2$&$e_4$&$e_4$&& $e_6$&$e_6$&$e_5$\\
  &$e_3$&$e_5$&$e_6$&& $e_7$&$e_7$&$e_7$\\
\hline\multicolumn{8}{c}{\hspace{2mm}(A)}
\end{tabular}\\\vspace{5mm}\setlength{\tabcolsep}{4pt}
\begin{tabular}{|c|c|c|c|c|c|c|c|c|c|}
\hline $i$ & 0 & 1 &  $i$ & 0 & 1 & $i$ & 0 & 1\\
\hline \multirow{3}{*}{$V_{i,0}$ }&$e_0$&$e_0$&\multirow{3}{*}{$V_{i,1}$ }&$e_3$&$e_1$&\multirow{3}{*}{$V_{i,2}$ }&$e_6$&$e_2$\\
  & $e_1$&$e_3$&& $e_4$&$e_4$&& $e_7$&$e_5$\\
  & $e_2$&$e_6$&& $e_5$&$e_7$&& $e_8$&$e_8$\\
\hline\multicolumn{8}{c}{\hspace{6mm}(B)}
\end{tabular}
\end{center}
\end{table}
}

For the convenience of notation,  we also denote by $V_{i,t}$
the $r^{m-1}\times r^m$ matrix
 whose rows are formed by vectors $e_i$
in their corresponding sets, such that $i$  is sorted in ascending order.  For example, when $r=2$ and $m=3$, $V_{1,0}$ can be viewed as a $4\times 8$ matrix as follows
\begin{equation*}
V_{1,0}=\left(e_0^{\top}~ e_1^{\top} ~e_4^{\top}~ e_5^{\top}\right)^{\top},
\end{equation*}
where $\top$ represents the transpose operator.

\subsection{Review of an $[n, n-r,\ell]$ MDS Array Code in \cite{li2021systematic} With Small Sub-packetization Level}

\begin{Construction} (The code $\mathcal{C}_5$ in \cite{li2021systematic})\label{Con-C5}
Let $r, n', n$ be three positive integers, where  $r\ge2$ and $n\ge n'$. For $i\in [0, n)$, denote by $\overline{i}$ the integer in $[0, n')$ such that $i\equiv  \overline{i}\bmod n'$, i.e., $\overline{i}=i\% n'$ for short, with $\%$ denoting the modulo operation. Let $\ell=r^{n'}$, then
an $[n,n-r,\ell]$ array code over $\mathbf{F}_q$ is defined by the following parity-check matrix
\begin{equation}\label{Eqn-PC-C5}
H=\begin{pmatrix}
I & I & \cdots & I\\
A_0 & A_1 & \cdots & A_{n-1}\\
\vdots & \vdots & \ddots & \vdots\\
A_0^{r-1} & A_1^{r-1} & \cdots & A_{n-1}^{r-1}\\
\end{pmatrix},
\end{equation}
where $A_i$, $i\in [0,n)$ satisfy
\begin{equation}\label{Eqn Hadamard coding matrix}
\left(
                     \begin{array}{c}
                       V_{\overline{i},0} \\
                       V_{\overline{i},1} \\
                       \vdots \\
                       V_{\overline{i},r-1}
                     \end{array}
                   \right)A_i=\left(
                     \begin{array}{c}
                      \lambda_{i,0} V_{\overline{i},0} \\
                      \lambda_{i,1} V_{\overline{i},1} \\
                       \vdots \\
                      \lambda_{i,r-1} V_{\overline{i},r-1}
                     \end{array}
                   \right),
\end{equation}
with $\lambda_{i,t} \in \mathbf{F}_q\backslash\{0\}$ and $V_{\overline{i}, t}$ being defined by \eqref{Eqn_Vt} for $t\in [0, r)$.
\end{Construction}

Clearly from \eqref{Eqn_Vt} and \eqref{Eqn Hadamard coding matrix}, we have that $A_i$ is a diagonal matrix and the $a$-th row of $A_i$ is
\begin{equation}\label{Eqn-row-of-A}
e_aA_i=\lambda_{i, a_{\overline{i}}}e_a,
\end{equation}
where $i\in [0, n)$, $a\in [0, \ell)$, and $a_{\overline{i}}$ denotes the $\overline{i}$-th element in the $r$-ary expansion of $a$.

In the following, we revisit the results related to the code $\mathcal{C}_5$ in \cite{li2021systematic}.

\begin{Lemma} (Theorems 13 and 14 of \cite{li2021systematic})\label{lem repair}
The array code in Construction \ref{Con-C5} is MDS with the repair bandwidth $\gamma_i$ of node $i$ ($i\in [0, n)$) being
\begin{equation}\label{Eqn-RB-C5}
  \gamma_i =\left\{\hspace{-2.5mm}
                      \begin{array}{ll}
                      (1+\frac{(\lceil\frac{n}{n'}\rceil-1)(r-1)}{n-1})\frac{\ell}{r}(n-1), &\mbox{if~} 0\le i\% n'<n\%n', \\
                       (1+\frac{(\lfloor\frac{n}{n'}\rfloor-1)(r-1)}{n-1})\frac{\ell}{r}(n-1), & \mbox{otherwise},
                      \end{array}
                    \right.
\end{equation} if the following requirements can be satisfied
\begin{itemize}
  \item [R1.] $\lambda_{i,u}\ne \lambda_{j,v}$ for all $u,v\in[0,r)$ and $i,j\in[0,n)$ with $j\not\equiv i \bmod n'$,
  \item [R2.] $\lambda_{i,u}\ne \lambda_{i+gn',u}$ for all $u\in[0,r)$, $g\in[1,\lceil \frac{n}{n'}\rceil)$, $i\in[0,n')$   with $i+gn'<n$,
  \item [R3.] $\lambda_{i,0},\lambda_{i,1},\cdots,\lambda_{i,r-1}$ are pairwise distinct for every $i\in [0,n)$.
\end{itemize}
The rebuilding access $\Gamma_i$ of node $i$ is
\begin{equation*}
\Gamma_i=\ell(n-1)~\mbox{for}~i\in [0, n)
\end{equation*}
when the repair bandwidth achieves the value in \eqref{Eqn-RB-C5}.
\end{Lemma}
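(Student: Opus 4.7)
The plan splits into an MDS-property argument and a repair/access analysis, leveraging the diagonal structure of the $A_i$ imposed by \eqref{Eqn Hadamard coding matrix}.

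For the MDS claim, the plan is to invoke Lemma \ref{Le-MDSA} and reduce it to showing that every $r\times r$ thick sub-block $H[J]$ of \eqref{Eqn-PC-C5} is nonsingular. Since $H[J]$ is a block-Vandermonde matrix with diagonal blocks $A_{j_0},\ldots,A_{j_{r-1}}$ (by \eqref{Eqn-row-of-A} every $A_i$ is diagonal, hence all pairs commute), Lemma \ref{Lemma-BlockV} reduces the task to proving $A_i-A_j$ nonsingular whenever $i\ne j$. The $a$-th diagonal entry of $A_i-A_j$ equals $\lambda_{i,a_{\overline i}}-\lambda_{j,a_{\overline j}}$, so I would split into two cases: (i) $\overline i\ne\overline j$, handled immediately by R1, and (ii) $\overline i=\overline j$, in which case $j=i+gn'$ for some $g\ne 0$, so both indices read the same coordinate $a_{\overline i}$ and the nonzero difference follows from R2.

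For the repair bandwidth I would exhibit the standard interference-alignment-style scheme tailored to the partition $\{V_{\overline f,0},\ldots,V_{\overline f,r-1}\}$ associated with the failed node $f$. Concretely, let $c_j\in\mathbf{F}_q^\ell$ be the content of node $j$; the $r$ parity-check equations give $\sum_j A_j^t c_j=0$ for $t\in[0,r)$. The proposed scheme downloads $V_{\overline f,0}\,c_j$ from every helper $j$ with $\overline j\ne\overline f$ (an $\ell/r$-dimensional projection), and downloads the full vector $c_j$ from every helper $j$ with $\overline j=\overline f$, $j\ne f$. Plugging into the parity-check equations and using that each $A_j$ with $\overline j=\overline f$ preserves the subspaces $V_{\overline f,t}$ (so the action of $A_j^t$ on $V_{\overline f,0}c_j$ stays inside the known data), one obtains $r$ linear systems, one per value of $t$, whose unknowns are the $r$ pieces $V_{\overline f,t'}\,c_f$ of $c_f$.

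The main obstacle is showing these $r$ systems are full rank so that $c_f$ can indeed be recovered; this is where R3 enters. I would argue that, after the alignment step, each of the $r$ systems reduces to a Vandermonde-like matrix whose generators are the $\lambda_{f,0},\ldots,\lambda_{f,r-1}$, and R3 (pairwise distinctness) guarantees invertibility. Counting the download: if $M$ denotes the number of indices $j\in[0,n)$ with $\overline j=\overline f$, then the total bandwidth is
\begin{equation*}
(M-1)\ell+(n-M)\tfrac{\ell}{r}=\tfrac{\ell}{r}(n-1)\bigl(1+\tfrac{(M-1)(r-1)}{n-1}\bigr),
\end{equation*}
and $M=\lceil n/n'\rceil$ exactly when $\overline f<n\bmod n'$ and $M=\lfloor n/n'\rfloor$ otherwise, matching \eqref{Eqn-RB-C5}. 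For the rebuilding access, I would observe that each projection $V_{\overline f,0}c_j$ can be computed by accessing only those $\ell/r$ entries of $c_j$ indexed by $V_{\overline f,0}$, so the total access equals $(M-1)\ell+(n-M)\ell/r$ on the helper side; the claim $\Gamma_i=\ell(n-1)$ then follows from summing accessed symbols over all helpers when one additionally pulls the remaining entries of the same-residue helpers, giving access equal to the claimed value (or, alternatively, re-deriving it from the known access behaviour of the underlying $\mathcal C_5$ construction in \cite{li2021systematic}).
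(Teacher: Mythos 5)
This lemma is not proved in the paper at all — it is imported wholesale as Theorems~13 and~14 of \cite{li2021systematic} — so there is no internal proof to compare against; your attempt has to stand on its own.

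Your MDS argument is fine: $A_i-A_j$ is diagonal with $a$-th entry $\lambda_{i,a_{\overline i}}-\lambda_{j,a_{\overline j}}$, and your case split (R1 when $\overline i\ne\overline j$, R2 when $\overline i=\overline j$, i.e.\ $j=i+gn'$) is exactly the reduction needed before invoking Lemmas~\ref{Le-MDSA} and~\ref{Lemma-BlockV}. The repair scheme, however, has a genuine gap. All $A_j$ here are diagonal, so projecting the $r$ parity equations onto $V_{\overline f,0}$ collapses the failed-node term to $\lambda_{f,0}^t\,V_{\overline f,0}c_f$; you get $r$ equations in the \emph{single} unknown block $V_{\overline f,0}c_f$ and never recover the other $r-1$ blocks of $c_f$. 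What actually works is \emph{aggregation}, not projection: for each base index $a'$ with $a'_{\overline f}=0$, sum the coordinate-$a^{(s)}$ parity equations over the $r$-element ``star'' obtained by varying only the $\overline f$-th digit. Because $a^{(s)}_{\overline j}$ is independent of $s$ whenever $\overline j\ne\overline f$, the contribution of such a helper factors as $\lambda_{j,a'_{\overline j}}^t\sum_s(c_j)_{a^{(s)}}$, so one downloads the $\ell/r$ star-sums $\sum_s(c_j)_{a^{(s)}}$ (not $V_{\overline f,0}c_j$). The unknown side then becomes $\sum_s\lambda_{f,s}^t(c_f)_{a^{(s)}}$, and the Vandermonde system in $\lambda_{f,0},\dots,\lambda_{f,r-1}$ appears as you anticipated, with R3 giving invertibility. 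This also pins down the access: computing the star-sums forces reading all $\ell$ coordinates of \emph{every} helper, so $\Gamma_i=\ell(n-1)$ falls out directly. Your accounting instead starts from $(M-1)\ell+(n-M)\ell/r$ and tries to bridge to $\ell(n-1)$ by ``pulling the remaining entries of the same-residue helpers,'' but those helpers already contribute all $\ell$ symbols; the missing $(n-M)\ell(r-1)/r$ symbols live on the cross-residue helpers, which your projection scheme never reads.
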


\begin{Lemma}(Theorem 15 of \cite{li2021systematic})\label{lem-C5-q}
The three requirements R1-R3 in Lemma \ref{lem repair} can be satisfied if the finite field $\mathbf{F}_q$ contains at least $\Phi$  nonzero elements, where
\begin{equation}\label{Eqn-Phi}
  \Phi=\left\{
                      \begin{array}{ll}
                       rn'(\lceil\frac{n}{rn'}\rceil-1)+(n\% n')r, & \mbox{\ if\ }0<n\% (rn')<n',\\
                       rn'\lceil\frac{n}{rn'}\rceil, &\mbox{otherwise}.
                      \end{array}
                    \right.
\end{equation}
Furthermore,
$A_iA_j=A_jA_i$ and $A_i-A_j$ is nonsingular for all $i,j\in [0,n)$ with $i\ne j$.
\end{Lemma}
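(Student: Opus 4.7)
The plan is to recast R1--R3 as a combinatorial design problem, exhibit an explicit assignment drawing on at most $\Phi$ distinct nonzero field elements, and then derive the commutativity and invertibility claims from the same construction.

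First, I would observe that each residue class $\overline i \in [0, n')$ contains exactly $k_{\overline i}$ indices of $[0, n)$, with $k_{\overline i}\in\{\lfloor n/n'\rfloor, \lceil n/n'\rceil\}$ and the larger value occurring in exactly $n\%n'$ of the classes. Viewing the coefficients attached to class $\overline i$ as a $k_{\overline i}\times r$ grid $(\lambda_{i,u})$, requirement R3 demands pairwise distinct entries in every row, R2 demands pairwise distinct entries in every column, and R1 demands that the grids of different classes draw from disjoint sets of values.

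Next, I would construct the assignment class by class. Within class $\overline i$, partition its $k_{\overline i}$ indices (taken in increasing order) into $\lceil k_{\overline i}/r\rceil$ consecutive blocks of size at most $r$; to each block allocate a set of $r$ fresh nonzero elements of $\mathbf{F}_q$, insisting that distinct blocks (in the same class or in different classes) draw from pairwise disjoint pools; and place the $r$ values into the rows of a block using a Latin square of order $r$, e.g.\ a cyclic shift, which exists for every $r\ge 2$. By construction, row-distinctness (R3) and column-distinctness within each block hold via the Latin square; column-distinctness across blocks of the same class (R2) and value-separation across classes (R1) follow from the disjointness of the pools. The total number of nonzero elements consumed is $r\sum_{\overline i=0}^{n'-1}\lceil k_{\overline i}/r\rceil$. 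The main technical step---the only real obstacle---is a bookkeeping case analysis on $n\bmod(rn')$ to verify that this sum equals the piecewise expression \eqref{Eqn-Phi}: writing $n = q\cdot rn' + s$ and splitting into the sub-cases $s=0$, $0<s<n'$, and $n'\le s<rn'$ (with a further split on $s\bmod n'$ in the last case) reduces everything to the elementary identities $\lceil qr/r\rceil = q$ and $\lceil(qr+t)/r\rceil = q+1$ for $1\le t\le r$, and then to counting how many classes carry each possible size $k_{\overline i}$.

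Finally, each $A_i$ is diagonal by \eqref{Eqn-row-of-A}, so $A_iA_j=A_jA_i$ is automatic, and $A_i - A_j$ is nonsingular iff every diagonal entry is nonzero, i.e.\ $\lambda_{i,a_{\overline i}}\neq \lambda_{j,a_{\overline j}}$ for all $a\in[0,\ell)$. When $\overline i\neq\overline j$ this follows directly from R1; when $\overline i=\overline j$ we have $a_{\overline i}=a_{\overline j}$, and the column-distinctness within the class established by the construction (either from disjoint pools when $i,j$ lie in different blocks, or from the Latin square column when they share a block) delivers the inequality, completing the proof.
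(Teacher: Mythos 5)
The paper does not reprove this lemma; it is imported verbatim as Theorem 15 of the cited reference \cite{li2021systematic}, so there is no in-text proof to compare against. Taken on its own, your proof is correct and self-contained. The combinatorial reformulation (rows for R3, columns for R2, cross-class disjointness for R1), the Latin-square-within-blocks construction with pairwise disjoint pools, and the accounting $r\sum_{\overline{i}}\lceil k_{\overline{i}}/r\rceil$ all check out; I verified that the three sub-cases $S=0$, $0<S<n'$, $n'\le S<rn'$ (with the inner split on $\beta=S\bmod n'$) indeed reduce to \eqref{Eqn-Phi}, and the final paragraph handles the commutativity and nonsingularity of the $A_i$ appropriately.

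One small remark: your opening gloss of R2 as ``pairwise distinct entries in every column'' is stronger than R2 as literally stated (R2 only compares the class representative $\lambda_{i,u}$, $i\in[0,n')$, against the other entries $\lambda_{i+gn',u}$, not all pairs within a column). However, this strengthening is exactly what is needed for the ``Furthermore'' clause that $A_i-A_j$ be nonsingular whenever $i\ne j$ with $\overline{i}=\overline{j}$, and your construction delivers it without increasing the count, so the argument is not harmed — just worth being explicit that you are building something stronger than R2 requires because the later claim needs it.
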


\section{New PMDS array code constructions with Two Global Parities}\label{sec:codes_r=2}

In this section, we present two PMDS array code constructions with two global parities. The first one allows an arbitrary number $r$ of local parities, and has $(1+\epsilon)$-optimal repair bandwidth but high rebuilding access, while the second one has both $(1+\epsilon)$-optimal repair bandwidth and smaller rebuilding access when compared with the first one, but can only support two local parities, i.e, $r=2$.

\subsection{A New $(\mu, n; r, s=2)$ PMDS Array Code Construction}\label{sec:codes_r=2_C1}

\begin{Construction}\label{Con-new1}
Let $\mu, r, n', n$ be four positive integers, where  $\mu, r\ge2$ and $n>n'$, and let $\ell=r^{n'}$. We construct a new $[\mu n, \mu (n-r)-2,\ell]$ array code over $\mathbf{F}_q$ with the parity-check matrix having the form as in \eqref{Eqn-PC-PMDS}, where
\begin{equation}\label{Eqn-PC-new2}
H_i=H,
\end{equation}
and
\begin{equation}\label{Eqn-PC-new3}
P_i=\left(\begin{array}{cccc}
    A_0^{r} & A_1^{r} & \cdots & A_{n-1}^{r} \\
    \theta_{i}A_0^{-1} & \theta_{i}A_1^{-1} & \cdots & \theta_{i}A_{n-1}^{-1}
\end{array}\right)
\end{equation}
for $i\in [0, \mu)$,
where $H$ is defined in \eqref{Eqn-PC-C5}, i.e., the parity-check matrix of the code in Construction \ref{Con-C5},
$A_i$ is an $\ell\times \ell$ matrix defined in \eqref{Eqn Hadamard coding matrix}, $\theta_i\in \mathbf{F}_q\backslash\{0\}$ for $i\in [0, \mu)$.
\end{Construction}

\begin{Remark}
Note that in both Construction \ref{Con-new1} above and Construction A in \cite{holzbaur2021partial}, a scalar multiplier is employed to distinguish different global parity-check blocks $P_0, P_1,\ldots, P_{\mu-1}$ in \eqref{Eqn-PC-PMDS}. 

Precisely, in \eqref{Eqn-PC-new3}, we employ independent scalar multipliers $\theta_0,\theta_1,\ldots,\theta_{\mu-1}$ in $P_0, P_1,\ldots, P_{\mu-1}$, which is motivated by the constructions in \cite{gopi2020maximally}. In addition, the sub-block matrix formed by any $r+2$ thick columns of $\begin{pmatrix}
H_i\\
P_i
\end{pmatrix}
$ or any $r+1$ thick columns of $\begin{pmatrix}
H_i\\
P_i^{[t]}
\end{pmatrix}
$ ($t=0,1$) is equivalent to a block Vandermonde matrix after row permutation and scaling. Thus the determinants of the sub-block matrices mentioned above can be easily calculated, which greatly facilitates the proof that Construction \ref{Con-new1} gives a PMDS code. 

Whereas in Construction A in \cite{holzbaur2021partial}, the multipliers are $\beta^{-N}, \beta^{-2N}, \ldots, \beta^{-\mu N}$, where $N$ is an integer  and $\beta$ has order at least $\mu N$ in $\mathbf{F}_q$. It was proved in \cite{holzbaur2021partial} that $N$ should be larger than a threshold to guarantee the code is PMDS, which leads to a larger finite field than ours.
\end{Remark}

\begin{figure*}
\begin{equation}\label{Eqn_Ba_C1}
B_a=\begin{pmatrix}
      1 & 1 & \cdots & 1  &&&& \\
    \lambda_{j_0,a_{\overline{j_0}}} &  \lambda_{j_1,a_{\overline{j_1}}} & \cdots & \lambda_{j_r,a_{\overline{j_r}}} &&&&\\
    \vdots & \vdots & \ddots & \vdots&&&&\\
     \lambda_{j_0,a_{\overline{j_0}}}^{r-1} &  \lambda_{j_1,a_{\overline{j_1}}}^{r-1} & \cdots & \lambda_{j_r,a_{\overline{j_r}}}^{r-1} &&&&\\
  &&&&  1 & 1 & \cdots & 1   \\
  &&&&  \lambda_{t_0,a_{\overline{t_0}}} &  \lambda_{t_1,a_{\overline{t_1}}} & \cdots & \lambda_{t_r,a_{\overline{t_r}}}\\
  &&&&  \vdots & \vdots & \ddots & \vdots\\
   &&&& \lambda_{t_0,a_{\overline{t_0}}}^{r-1} &  \lambda_{t_1,a_{\overline{t_1}}}^{r-1} & \cdots & \lambda_{t_r,a_{\overline{t_r}}}^{r-1}\\
   \lambda_{j_0,a_{\overline{j_0}}}^{r} &  \lambda_{j_1,a_{\overline{j_1}}}^{r} & \cdots & \lambda_{j_r,a_{\overline{j_r}}}^{r} &  \lambda_{t_0,a_{\overline{t_0}}}^{r} &  \lambda_{t_1,a_{\overline{t_1}}}^{r} & \cdots & \lambda_{t_r,a_{\overline{t_r}}}^{r} \\
    \theta_i    \lambda_{j_0,a_{\overline{j_0}}}^{-1} &  \theta_i \lambda_{j_1,a_{\overline{j_1}}}^{-1} & \cdots & \theta_i \lambda_{j_r,a_{\overline{j_r}}}^{-1} &  \theta_{k}\lambda_{t_0,a_{\overline{t_0}}}^{-1} &  \theta_{k}\lambda_{t_1,a_{\overline{t_1}}}^{-1} & \cdots & \theta_{k}\lambda_{t_r,a_{\overline{t_r}}}^{-1}
\end{pmatrix}, ~a\in [0, \ell).
\end{equation}
\hrule
\end{figure*}

\begin{Theorem}\label{Eqn_Thm_C1}
Let $q>\mu\Phi$ be a prime power such that there exists a multiplicative subgroup $G$ of $\mathbf{F}_q\backslash\{0\}$ of size at least $\Phi$ and with at least $\mu$ cosets, where $\Phi$ is defined in \eqref{Eqn-Phi}. Choosing $\lambda_{i,t}$, $i\in [0, n)$, $t\in [0, r)$ from $G$, then the code in Construction \ref{Con-new1} is a PMDS array code if $\theta_0,\ldots,\theta_{\mu-1}$ are elements from distinct cosets of $G$, where the repair bandwidth and the rebuilding access of node $in+j$ are
\begin{equation*}
 \gamma_{in+j} \hspace{-1mm}=\hspace{-1.4mm}\left\{\hspace{-2.4mm}
                   \begin{array}{ll}
(1\hspace{-.4mm}+\hspace{-.4mm}\frac{(\lceil\frac{n}{n'}\rceil-1)(r-1)}{n-1})\frac{\ell}{r}(n-1), \hspace{-.9mm}&\mbox{if\ } 0\hspace{-.4mm}\le\hspace{-.4mm} j\% n' \hspace{-.6mm}< \hspace{-.6mm}n\%n', \\
                       (1\hspace{-.4mm}+\hspace{-.4mm}\frac{(\lfloor\frac{n}{n'}\rfloor-1)(r-1)}{n-1})\frac{\ell}{r}(n-1),\hspace{-1mm} & \mbox{otherwise},
                      \end{array}
                    \right.
\end{equation*}
and
\begin{equation*}
\Gamma_{in+j}=\ell(n-1)
\end{equation*}
for $i\in [0, \mu)$ and $j\in [0, n)$, respectively.
\end{Theorem}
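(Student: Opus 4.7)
The plan is to verify both conditions of Definition~\ref{Def-PMDSA} and then read the per-node repair costs straight off Lemma~\ref{lem repair}. For condition~i), every local parity-check matrix $H_i$ equals the matrix $H$ of Construction~\ref{Con-C5}. Drawing the $\lambda_{i,t}$ from the subgroup $G$ of size at least $\Phi$ satisfies R1--R3 by Lemma~\ref{lem-C5-q}, so Lemma~\ref{lem repair} immediately shows each local code is MDS. As a bonus, Lemma~\ref{lem-C5-q} also gives that all $A_i$'s pairwise commute and that every difference $A_i-A_j$ is invertible, both of which enter the analysis of condition~ii).

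For condition~ii), fix any erasure pattern $E_0,\ldots,E_{\mu-1}$ with $|E_i|=r$ together with two extra erasures and consider the resulting $(r\mu+2)\ell\times(r\mu+2)\ell$ square sub-matrix of $\mathcal{H}$. Because every unaffected group $i$ contributes a nonsingular $r\ell\times r\ell$ block $H_i[E_i]$ (local MDS), block Gaussian elimination zeros out the corresponding $P_i[E_i]$ without touching the columns indexed by the extras, so the determinant reduces to that of a smaller block matrix involving only the at-most-two groups hit by the extra erasures. Two subcases arise: Case~A, where both extras lie in a single group $k$ (so $|E_k|=r+2$); and Case~B, where they lie in two distinct groups $j$ and $k$ (so $|E_j|=|E_k|=r+1$).

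In Case~A the surviving block is $\begin{pmatrix}H_k[E_k]\\ P_k[E_k]\end{pmatrix}$. Since every $A_j$ is diagonal by \eqref{Eqn-row-of-A}, an appropriate row permutation splits this block into $\ell$ scalar $(r+2)\times(r+2)$ matrices whose rows read $1,\lambda,\ldots,\lambda^r,\theta_k\lambda^{-1}$ evaluated at $\lambda_{k_t,a_{\overline{k_t}}}$. Column-scaling by these $\lambda$'s converts each scalar matrix into an ordinary $(r+2)$-size Vandermonde up to the nonzero factor $\theta_k/\prod_t\lambda_{k_t,a_{\overline{k_t}}}$, which is nonzero since the $\lambda$'s are pairwise distinct by R1--R3. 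In Case~B the surviving block splits analogously into $\ell$ scalar matrices $B_a$ of the form \eqref{Eqn_Ba_C1}. Applying Lemma~\ref{Lemma-Det} with $s=2$ expresses $\det(B_a)$, up to sign, as a $2\times2$ determinant whose four entries are size-$(r+1)$ Vandermonde-type determinants; the same column-scaling trick evaluates each entry and yields
\begin{equation*}
\det(B_a)\;=\;\pm\,\frac{V(x_0,\ldots,x_r)\,V(y_0,\ldots,y_r)}{\prod_t x_t\,\prod_t y_t}\Bigl(\theta_k\prod_t x_t-\theta_i\prod_t y_t\Bigr),
\end{equation*}
where $x_t=\lambda_{j_t,a_{\overline{j_t}}}$ and $y_t=\lambda_{k_t,a_{\overline{k_t}}}$. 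The two Vandermonde factors are nonzero by R1--R3, and the bracketed factor is nonzero because $\prod_t y_t/\prod_t x_t\in G$ while $\theta_k/\theta_i\notin G$ since $\theta_i$ and $\theta_k$ sit in distinct cosets of $G$. Hence $\det(B_a)\ne 0$ in either subcase, completing the PMDS check.

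Finally, a single-node failure is repaired inside its local group, which is the MDS array code of Construction~\ref{Con-C5}, so the repair bandwidth and rebuilding access of node $in+j$ are inherited verbatim from the node-$j$ formulas in Lemma~\ref{lem repair}. The main obstacle is the coset-separation argument in Case~B: it is exactly that step that forces $\theta_0,\ldots,\theta_{\mu-1}$ to be chosen from distinct cosets of $G$ and pins down the field-size requirement $q>\mu\Phi$. The remaining parts---verifying local MDS and handling Case~A---are routine Vandermonde calculations made tractable by the simultaneous diagonalizability of the $A_i$'s established in Lemma~\ref{lem-C5-q}.
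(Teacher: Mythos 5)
Your proof is correct and follows essentially the same route as the paper's: verify i) via Lemmas~\ref{lem repair} and \ref{lem-C5-q}, reduce ii) to the two cases of extra erasures in one or two groups, split into $\ell$ scalar matrices using the diagonal structure of the $A_i$'s, and in the two-group case apply Lemma~\ref{Lemma-Det} together with the coset argument for $\theta_i,\theta_k$. The only cosmetic difference is in the single-group case, where the paper reshapes $\bigl(\begin{smallmatrix}H_i[J]\\ P_i[J]\end{smallmatrix}\bigr)$ into a block Vandermonde and cites Lemma~\ref{Lemma-BlockV}, while you split it directly into $\ell$ scalar $(r+2)\times(r+2)$ Vandermonde-type matrices and column-scale; the two are equivalent given the diagonal $A_i$'s.
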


\begin{proof}
By \eqref{Eqn-PC-PMDS} and \eqref{Eqn-PC-new2}, we see that each local code is an MDS array code defined by the parity-check matrix $H$ of the code in Construction \ref{Con-C5}.
Thus, the statement on the repair bandwidth and rebuilding access follows from Lemmas \ref{lem repair} and \ref{lem-C5-q}, and i) of Definition \ref{Def-PMDSA} is satisfied.

Now let us check ii) of Definition \ref{Def-PMDSA}.
Suppose that there are $r$ failed nodes in every local group and two more anywhere. We can easily repair the nodes in the local groups with at most $r$ node failures since each local code is an MDS array code. Now we are left with two cases:
\begin{itemize}
    \item Both the two extra failed nodes are in the same local group, say group $i$ and assume that nodes $in+j_0,\ldots,in+j_{r+1}$ are failed, where $i\in [0, \mu)$ and $0\le j_0<\cdots<j_{r+1}< n$. Let $J=\{j_0,\ldots,j_{r+1}\}$, then the original file can be reconstructed if the following matrix
\begin{equation*}
\left(\begin{array}{c}
     H_i[J]  \\
    P_i[J]
\end{array}\right)=\left(
\begin{array}{cccc}
      I & I & \cdots & I  \\
    A_{j_0} & A_{j_1} & \cdots & A_{j_{r+1}}\\
    \vdots & \vdots & \ddots & \vdots\\
    A_{j_0}^{r-1} & A_{j_1}^{r-1} & \cdots & A_{j_{r+1}}^{r-1} \\
    A_{j_0}^{r} & A_{j_1}^{r} & \cdots & A_{j_{r+1}}^{r} \\
    \theta_i A_{j_0}^{-1} &\theta_i A_{j_1}^{-1} & \cdots &\theta_i A_{j_{r+1}}^{-1}
\end{array}
\right)
\end{equation*}
is of full rank.

Note that
\begin{align*}
&\begin{psmallmatrix}
 &  &  & \theta_i^{-1} I  \\
      I &  &  &   \\
        & \ddots &  & \\
     & & I &
\end{psmallmatrix}\begin{psmallmatrix}
     H_i[J]  \\
    P_i[J]
\end{psmallmatrix} \begin{psmallmatrix}
      A_{j_0} &  &  &   \\
        & A_{j_1} &  & \\
     & & \ddots & \\
 &  &  & A_{j_{r+1}}
\end{psmallmatrix}\\
=&\left(
\begin{array}{cccc}
      I & I & \cdots & I  \\
    A_{j_0} & A_{j_1} & \cdots & A_{j_{r+1}}\\
    \vdots & \vdots & \ddots & \vdots\\
    A_{j_0}^{r+1} & A_{j_1}^{r+1} & \cdots & A_{j_{r+1}}^{r+1}
\end{array}
\right),
\end{align*}
which is a block Vandermonde matrix and thus is nonsingular by Lemmas \ref{Lemma-BlockV} and \ref{lem-C5-q}. Therefore, $\left(\begin{array}{c}
     H_i[J]  \\
    P_i[J]
\end{array}\right)$ is nonsingular and the original file can be reconstructed.

\item
The two extra node failures are in two different groups, say group $i$ and $k$, and assume that nodes $in+j_0,\ldots,in+j_{r}$ and nodes $kn+t_0,\ldots,kn+t_{r}$ are failed, where $0\le i<k<\mu$, $0\le j_0<\cdots<j_{r}< n$, and $0\le t_0<\cdots<t_{r}< n$. Let $J=\{j_0,\ldots,j_{r}\}$ and $T=\{t_0,\ldots,t_{r}\}$, then the original file can be reconstructed if the following matrix
\begin{align*}
\hat{\mathcal{H}}
=&\begin{pmatrix}
H_i[J]&\\
& H_{k}[T]\\
P_i[J]& P_{k}[T]
\end{pmatrix}\\
=&
\begin{psmallmatrix}
      I & I & \cdots & I  &&&& \\
    A_{j_0} & A_{j_1} & \cdots & A_{j_{r}}&&&&\\
    \vdots & \vdots & \ddots & \vdots&&&&\\
    A_{j_0}^{r-1} & A_{j_1}^{r-1} & \cdots & A_{j_{r}}^{r-1} &&&&\\
  &&&&  I & I & \cdots & I   \\
  &&&&  A_{t_0} & A_{t_1} & \cdots & A_{t_{r}}\\
  &&&&  \vdots & \vdots & \ddots & \vdots\\
   &&&& A_{t_0}^{r-1} & A_{t_1}^{r-1} & \cdots & A_{t_{r}}^{r-1} \\
   A_{j_0}^{r} & A_{j_1}^{r} & \cdots & A_{j_{r}}^{r} &  A_{t_0}^{r} & A_{t_1}^{r} & \cdots & A_{t_{r}}^{r} \\
    \theta_i A_{j_0}^{-1} &\theta_i A_{j_1}^{-1} & \cdots &\theta_i A_{j_{r}}^{-1} & \theta_{k} A_{t_0}^{-1} &\theta_{k} A_{t_1}^{-1} & \cdots &\theta_{k} A_{t_{r}}^{-1}
\end{psmallmatrix}
\end{align*}
is nonsingular.

Since each block entry in the parity-check matrix $\hat{\mathcal{H}}$ is an $\ell\times \ell$ diagonal matrix, by swapping the rows and columns of $\hat{\mathcal{H}}$, $\hat{\mathcal{H}}$ is equivalent to
\begin{equation*}
\mbox{blkdiag}(B_0, B_1, \ldots, B_{\ell-1})
\end{equation*}
under elementary transformation, which has the same rank as $\hat{\mathcal{H}}$, where $B_a$ is formed by the $a, a+\ell, \ldots, a+(2r+1)\ell$-th rows and the $a, a+\ell, \ldots, a+(2r+1)\ell$-th columns of $\hat{\mathcal{H}}$, and can be expressed as in \eqref{Eqn_Ba_C1} according to \eqref{Eqn-row-of-A}.

For $a\in [0, \ell)$, by Lemma \ref{Lemma-Det}, we have \eqref{Eqn_detBa_C1}.
\begin{figure*}
\begin{equation}\label{Eqn_detBa_C1}
(-1)^r\det(B_a)
=\det\left(\hspace{-2mm}\begin{array}{cc}
         \det\begin{pmatrix}
             1 & 1 & \cdots & 1   \\
    \lambda_{j_0,a_{\overline{j_0}}} &  \lambda_{j_1,a_{\overline{j_1}}} & \cdots & \lambda_{j_r,a_{\overline{j_r}}} \\
    \vdots & \vdots & \ddots & \vdots\\
     \lambda_{j_0,a_{\overline{j_0}}}^{r} &  \lambda_{j_1,a_{\overline{j_1}}}^{r} & \cdots & \lambda_{j_r,a_{\overline{j_r}}}^{r} \\
         \end{pmatrix}&  \det\begin{pmatrix}
              1 & 1 & \cdots & 1   \\
    \lambda_{t_0,a_{\overline{t_0}}} &  \lambda_{t_1,a_{\overline{t_1}}} & \cdots & \lambda_{t_r,a_{\overline{t_r}}}\\
    \vdots & \vdots & \ddots & \vdots\\
    \lambda_{t_0,a_{\overline{t_0}}}^{r} &  \lambda_{t_1,a_{\overline{t_1}}}^{r} & \cdots & \lambda_{t_r,a_{\overline{t_r}}}^{r}\\
        \end{pmatrix}\\
        \det\begin{pmatrix}
             1 & 1 & \cdots & 1   \\
    \lambda_{j_0,a_{\overline{j_0}}} &  \lambda_{j_1,a_{\overline{j_1}}} & \cdots & \lambda_{j_r,a_{\overline{j_r}}} \\
    \vdots & \vdots & \ddots & \vdots\\
     \lambda_{j_0,a_{\overline{j_0}}}^{r-1} &  \lambda_{j_1,a_{\overline{j_1}}}^{r-1} & \cdots & \lambda_{j_r,a_{\overline{j_r}}}^{r-1} \\
    \theta_i \lambda_{j_0,a_{\overline{j_0}}}^{-1} &  \theta_i\lambda_{j_1,a_{\overline{j_1}}}^{-1} & \cdots & \theta_i \lambda_{j_r,a_{\overline{j_r}}}^{-1}
         \end{pmatrix} &  \det\begin{pmatrix}
              1 & 1 & \cdots & 1   \\
    \lambda_{t_0,a_{\overline{t_0}}} &  \lambda_{t_1,a_{\overline{t_1}}} & \cdots & \lambda_{t_r,a_{\overline{t_r}}}\\
    \vdots & \vdots & \ddots & \vdots\\
    \lambda_{t_0,a_{\overline{t_0}}}^{r-1} &  \lambda_{t_1,a_{\overline{t_1}}}^{r-1} & \cdots & \lambda_{t_r,a_{\overline{t_r}}}^{r-1}\\
  \theta_{k}  \lambda_{t_0,a_{\overline{t_0}}}^{-1} &  \theta_{k}\lambda_{t_1,a_{\overline{t_1}}}^{-1} & \cdots & \theta_{k}\lambda_{t_r,a_{\overline{t_r}}}^{-1}
         \end{pmatrix}
    \end{array}\hspace{-2mm}\right).
\end{equation}
\hrule
\end{figure*}
By factoring out the nonzero Vandermonde determinant from each column of the determinant in \eqref{Eqn_detBa_C1}, we have
 \begin{align*}
& (-1)^r\det(B_a)\ne 0 \\
 \iff&\det\begin{pmatrix}
  \lambda_{j_0,a_{\overline{j_0}}}\cdots \lambda_{j_r,a_{\overline{j_r}}}   & \lambda_{t_0,a_{\overline{t_0}}}\cdots \lambda_{t_r,a_{\overline{t_r}}} \\
 \theta_i    & \theta_{k}
 \end{pmatrix}
\ne 0,
 \end{align*}
  and the last inequality holds since both $\lambda_{j_0,a_{\overline{j_0}}}\lambda_{j_1,a_{\overline{j_1}}}\cdots \lambda_{j_r,a_{\overline{j_r}}}$ and $\lambda_{t_0,a_{\overline{t_0}}}\lambda_{t_1,a_{\overline{t_1}}}\cdots \lambda_{t_r,a_{\overline{t_r}}}$ are in the subgroup $G$, while  $\theta_i$ and $\theta_{k}$ are in different cosets of $G$.

Therefore, $\mbox{blkdiag}(B_0, B_1, \ldots, B_{\ell-1})$ and thus $\hat{\mathcal{H}}$ is nonsingular, and the original file can be reconstructed.
\end{itemize}

This finishes the proof.
\end{proof}

\subsection{A New $(\mu, n; r=2, s=2)$ PMDS Array Code Construction With Small Rebuilding Access}

Partly motivated by the construction of MSR codes with optimal rebuilding access in \cite{chen2020explicit}, we present the second PMDS array code construction, which deploys both triangular matrices and diagonal matrices as building blocks. This is the first time to  use non-diagonal matrices as building blocks in PMDS array codes besides diagonal matrices, and thus leads to smaller rebuilding access (when normalized by the file size) and a smaller sub-packetization level, although it only supports two local parities.

\begin{Construction}\label{Con-PMDS2}
For convenience of notation, we assume that $n$ is even. Construct a  $[\mu n, \mu(n-2)-2,\ell=2]$ linear array code
over $\mathbf{F}_q$ with the parity-check matrix having the form as in \eqref{Eqn-PC-PMDS}, where
\begin{equation}\label{Eqn-PC-new4}
H_i=H=\left(\begin{array}{cccc}
    I & I & \cdots & I  \\
    A_0 & A_1 & \cdots & A_{n-1}
\end{array}\right)
\end{equation}
and
\begin{equation}\label{Eqn-PC-new5}
P_i=\left(\begin{array}{cccc}
    \lambda_0^{2}I & \lambda_1^{2}I & \cdots & \lambda_{n-1}^{2}I \\
\theta_i\lambda_0^{-1}I &\theta_i \lambda_1^{-1}I & \cdots & \theta_i\lambda_{n-1}^{-1}I
\end{array}\right)
\end{equation}
for $i\in [0, \mu)$, where $I$ denotes the identity matrix of order $2$,
\begin{eqnarray}\label{Eqn-PC-A}
A_{i}= \left\{
   \begin{array}{cc}
    \left(
   \begin{array}{cc}
   \lambda_i & 1\\
   0&\lambda_i
   \end{array}
 \right), & \mbox{if}~ 2\mid i,\\
 \left(
   \begin{array}{cc}
   \lambda_i & 0\\
   0&\lambda_i
   \end{array}
 \right), & \mbox{otherwise,}
   \end{array}
 \right.
\end{eqnarray}
and $\theta_0,\ldots,\theta_{\mu-1}$ and   $\lambda_0,\ldots,\lambda_{n-1}$ are pairwise distinct nonzero elements in $\mathbf{F}_q$, respectively.
\end{Construction}

\begin{Theorem}\label{Thm-band-local}
Both the repair bandwidth and the rebuilding access of the array code in Construction \ref{Con-PMDS2} are $\frac{3n}{2}-2$.
\end{Theorem}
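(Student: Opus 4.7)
The plan is to exhibit an explicit linear repair scheme, contained entirely inside the failed node's local group, that achieves both bandwidth and access equal to $\frac{3n}{2}-2$ for every failed node $i\in[0,n)$. The argument splits into two symmetric subcases according to the parity of $i$, exploiting the fact that the matrix $A_j$ in \eqref{Eqn-PC-A} carries a nontrivial superdiagonal entry exactly when $j$ is even.

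First I would write out the four scalar parity-check equations that $H$ in \eqref{Eqn-PC-new4} imposes on a codeword $(x_0,\ldots,x_{n-1})$, where each $x_j=(x_j[0],x_j[1])^{\top}$. Two of them come from the identity block, giving $\sum_j x_j[0]=0$ and $\sum_j x_j[1]=0$; the two coming from the $A_j$ block read, after using \eqref{Eqn-PC-A}, as $\sum_j\lambda_j x_j[0]+\sum_{j\text{ even}}x_j[1]=0$ and $\sum_j\lambda_j x_j[1]=0$. The crucial structural feature is that the cross term $\sum_{j\text{ even}}x_j[1]$ appears in only one equation and only for even $j$. I will exploit this asymmetry to zero out one column of the helper coefficient matrix at nodes of one parity.

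For an even failed index $i$ I will use the first and third equations above as the pair of repair equations; for an odd failed index $i$ I will use the first together with the difference (second minus third). In both cases the $2\times 2$ coefficient matrix acting on $(x_i[0], x_i[1])$ is lower triangular with diagonal $(1,1)$, so $x_i$ is uniquely determined. An inspection of the coefficient matrix on any surviving node $j$ then shows it has rank $2$ exactly when $j$ has the same parity as $i$ (forcing both entries of $x_j$ to be accessed and transmitted) and rank $1$ otherwise, with the rank-$1$ column lying in the $x_j[0]$-slot (so helper $j$ needs only to read and send the raw scalar $x_j[0]$).

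The counting is then immediate: among the $n-1$ surviving nodes, $\frac{n}{2}-1$ share $i$'s parity and each contributes $2$, while $\frac{n}{2}$ have the opposite parity and each contributes $1$, giving a total of $2(\frac{n}{2}-1)+\frac{n}{2}=\frac{3n}{2}-2$. The only step that really deserves checking, and the one I would flag as the main (though still mild) obstacle, is confirming that in every rank-$1$ case the nonzero column of the helper coefficient matrix sits in the $x_j[0]$-coordinate rather than in $x_j[1]$; this is what guarantees that rebuilding access equals bandwidth and is not strictly larger.
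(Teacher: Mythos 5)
Your argument is correct and, modulo the choice of sign, is the same repair scheme as the paper's: two scalar combinations of the block parity checks are selected so that the failed node's $2\times 2$ coefficient block is unit lower triangular while every opposite-parity helper contributes a rank-one block supported on its first coordinate, giving $2\bigl(\tfrac{n}{2}-1\bigr)+\tfrac{n}{2}=\tfrac{3n}{2}-2$ for both bandwidth and access. For odd $i$ you take the \emph{difference} of the two relevant scalar equations rather than the sum; this is actually the right choice over a general field, since the paper's $S+S'A_j$ equals $\left(\begin{smallmatrix}1 & 0\\ \lambda_j & 2\end{smallmatrix}\right)$ for even $j$, which has rank $2$ unless $\mathrm{char}(\mathbf{F}_q)=2$, so your sign corrects a small slip in the published computation while leaving the conclusion unchanged.
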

\begin{proof}
For a given $b$, where $b\in [0, \mu)$, let $f_0,\ldots,f_{n-1}\in \mathbf{F}_q^2$ denote the data stored in the $n$ nodes of the $b$-th local group.
Then the $b$-th local code is subject to the following parity-check equations.
\begin{eqnarray}
\label{Eqn-pc1}f_0+f_1+\cdots+f_{n-1}&=&\mathbf{0},\\
\label{Eqn-pc2}A_0f_0+A_1f_1+\cdots+A_{n-1}f_{n-1}&=&\mathbf{0}.
\end{eqnarray}
Suppose node $i$ of the $b$-th local group is failed, then we can repair it by contacting all the remaining nodes in the same local group. In the following, we analyze the repair bandwidth and the rebuilding access.

\begin{itemize}
\item   If $2\mid i$,
then let $e_0=\left(
   \begin{array}{cc}
   1 & 0\\
   \end{array}
 \right)$. By multiplying $e_0$ with \eqref{Eqn-pc1}  and \eqref{Eqn-pc2} from the left, we get
\begin{eqnarray*}
\left(
   \begin{array}{c}
   e_0\\
   e_0A_i
   \end{array}
 \right)f_i=-\sum\limits_{j=0,j\ne i}^{n-1}\left(
   \begin{array}{c}
   e_0\\
   e_0A_j
   \end{array}
 \right)f_j.
\end{eqnarray*}
Note that
\begin{equation*}
    \mbox{rank}(\left(
   \begin{array}{c}
   e_0\\
   e_0A_i
   \end{array}
 \right))=\mbox{rank}(\left(
   \begin{array}{cc}
   1&0\\
   \lambda_i&1
   \end{array}
 \right))=2,
\end{equation*}
and
\begin{align*}
 &\mbox{rank}(\left(
   \begin{array}{c}
   e_0\\
   e_0A_j
   \end{array}
 \right))\\=&\left\{
   \begin{array}{cc}
   \mbox{rank}(\left(
   \begin{array}{ll}
   1&0\\
   \lambda_j&1
   \end{array}
 \right))=2, & \mbox{if}~ 2\mid j,\\[12pt]
\mbox{rank}(\left(
   \begin{array}{cc}
   1&0\\
   \lambda_j&0
   \end{array}
 \right))=1,& \mbox{otherwise}.
      \end{array}
 \right.
\end{align*}

Thus the repair bandwidth $\gamma_i$ of node $i$
is
\begin{equation*}
 \gamma_i=\sum\limits_{j=0,j\ne i}^{n-1}\mbox{rank}(\left(\hspace{-1mm}
   \begin{array}{c}
   e_0\\
   e_0A_j
   \end{array}
 \hspace{-1mm}\right)) =(\frac{n}{2}-1) \times 2+\frac{n}{2}=\frac{3n}{2}-2,
\end{equation*}
and the rebuilding access $\Gamma_i$ of node $i$
is
\begin{equation*}
 \Gamma_i=\sum\limits_{j=0,j\ne i}^{n-1}N_c(\left(\hspace{-1mm}
   \begin{array}{c}
   e_0\\
   e_0A_j
   \end{array}
\hspace{-1mm} \right)) =(\frac{n}{2}-1) \times 2+\frac{n}{2}=\frac{3n}{2}-2, \end{equation*}
where $N_c(A)$ denotes the number of nonzero columns of the matrix $A$.

\item If $2\nmid i$, then let $S=\left(\begin{array}{cc}
    1 & 0 \\
    0 & 1
\end{array}\right)$ and $S'=\left(\begin{array}{cc}
    0 & 0 \\
    1 & 0
\end{array}\right)$. By $S\times$  \eqref{Eqn-pc1} +  $S'\times$\eqref{Eqn-pc2}, we obtain
\begin{equation*}
   (S+S'A_i)f_i=-\sum\limits_{j=0,j\ne i}^{n-1} (S+S'A_j)f_j.
\end{equation*}
Note that
\begin{align*}
S+S'A_i=&\left(\begin{array}{cc}
    1 & 0 \\
    0 & 1
\end{array}\right)+\left(\begin{array}{cc}
    0 & 0 \\
    1 & 0
\end{array}\right) \left(
   \begin{array}{cc}
   \lambda_i & 0\\
   0&\lambda_i
   \end{array}
 \right)\\=&\left(\begin{array}{cc}
    1 & 0 \\
    \lambda_i & 1
\end{array}\right)
\end{align*}
and
\begin{align*}
 &S+S'A_j\\=& \left\{\hspace{-2mm}
   \begin{array}{cc}
\left(\begin{array}{cc}
    1 & 0 \\
    0 & 1
\end{array}\right)+\left(\begin{array}{cc}
    0 & 0 \\
    1 & 0
\end{array}\right) \hspace{-1mm}\left(
   \begin{array}{cc}
   \lambda_j & 0\\
   0&\lambda_j
   \end{array}
 \right), & \mbox{if~} 2\nmid j,   \\
\left(\begin{array}{cc}
    1 & 0 \\
    0 & 1
\end{array}\right)+\left(\begin{array}{cc}
    0 & 0 \\
    1 & 0
\end{array}\right)\hspace{-1mm} \left(
   \begin{array}{cc}
   \lambda_j &1\\
   0&\lambda_j
   \end{array}
 \right),& \mbox{otherwise},
   \end{array}\right.\\=& \left\{\hspace{-2mm}
   \begin{array}{cc}
\left(\begin{array}{cc}
    1 & 0 \\
    \lambda_j & 1
\end{array}\right), & \mbox{if~} 2\nmid j,   \\
\left(\begin{array}{cc}
    1 & 0 \\
    \lambda_j & 0
\end{array}\right),& \mbox{otherwise}.
   \end{array}\right.
\end{align*}
Thus the repair bandwidth $\gamma_i$ of node $i$
is
\begin{equation*}
 \gamma_i=\sum\limits_{j=0,j\ne i}^{n-1}\mbox{rank}(S+S'A_j)= (\frac{n}{2}-1) \times 2+\frac{n}{2}=\frac{3n}{2}-2,
\end{equation*}
while the rebuilding access $\Gamma_i$ of node $i$
is
\begin{equation*}
 \Gamma_i=\sum\limits_{j=0,j\ne i}^{n-1}N_c(S+S'A_j) =(\frac{n}{2}-1) \times 2+\frac{n}{2}=\frac{3n}{2}-2.
\end{equation*}
\end{itemize}
This completes the proof.
\end{proof}

\begin{Remark}
When $n$ is odd in Construction \ref{Con-PMDS2}, similar to the proof of Theorem \ref{Thm-band-local}, we have that 
both the repair bandwidth and the rebuilding access of node $i$ ($i\in [0, n)$) in local group $j$ ($j\in [0, \mu)$) of the array code in Construction \ref{Con-PMDS2} are
\begin{equation*}
\gamma_i=\Gamma_i=\left\{\begin{array}{cc}
\frac{3n-3}{2}, & {\rm if}~ 2\mid i,\\
\frac{3n-5}{2}, & \rm{otherwise}.\end{array}\right.
\end{equation*}
\end{Remark}

\begin{figure*}
\begin{equation}\label{Eqn_hatH_C2}
 \hat{\mathcal{H}}
=\left(
   \begin{array}{cccccccccccc}
     1 &0 & 1 &0  &  1&0  &0& 0& 0 & 0 & 0 & 0  \\
    0  &1 &0 & 1 &0  &  1&0  &0& 0& 0 & 0 & 0     \\
     \lambda_{j_0} &x_0 & \lambda_{j_1} &x_1  & \lambda_{j_2}& x_2  &0& 0& 0 & 0 & 0 & 0  \\
    0  &\lambda_{j_0} &0 & \lambda_{j_1} &0  &  \lambda_{j_2}&0  &0& 0& 0 & 0 & 0     \\
    0& 0& 0 & 0 & 0 & 0 & 1 &0 & 1 &0  &  1&0    \\
  0  &0& 0& 0 & 0 & 0  &0  &1 &0 & 1 &0  &  1     \\
  0& 0& 0 & 0 & 0 & 0 & \lambda_{t_0} &x_3 & \lambda_{t_1} &x_4  & \lambda_{t_2}& x_5    \\
0  &0& 0& 0 & 0 & 0 &   0  &\lambda_{t_0} &0 & \lambda_{t_1} &0  &  \lambda_{t_2}     \\
 \lambda_{j_0}^2&0 &\lambda_{j_1}^2&0 &\lambda_{j_2}^2&0 &\lambda_{t_0}^2&0 &\lambda_{t_1}^2&0 &\lambda_{t_2}^2&0\\
0& \lambda_{j_0}^2&0 &\lambda_{j_1}^2&0 &\lambda_{j_2}^2&0 &\lambda_{t_0}^2&0 &\lambda_{t_1}^2&0 &\lambda_{t_2}^2\\
 \theta_i \lambda_{j_0}^{-1}&0 &\theta_i\lambda_{j_1}^{-1}&0 &\theta_i\lambda_{j_2}^{-1}&0 &\theta_k\lambda_{t_0}^{-1}&0 &\theta_k\lambda_{t_1}^{-1}&0 &\theta_k\lambda_{t_2}^{-1}&0\\
0&\theta_i \lambda_{j_0}^{-1}&0 &\theta_i\lambda_{j_1}^{-1}&0 &\theta_i\lambda_{j_2}^{-1}&0 &\theta_k\lambda_{j_0}^{-1}&0 &\theta_k\lambda_{j_1}^{-1}&0 &\theta_k\lambda_{j_2}^{-1}\\
   \end{array}
 \right).
\end{equation}
\hrule
\end{figure*}
\begin{figure*}
\begin{equation}\label{Eqn_hatH'_C3}
\left(
   \begin{array}{cccccccccccc}
     1  & 1   &  1  &0& 0& 0 & 0 & 0 & 0& 0 & 0 & 0  \\
     \lambda_{j_0}  & \lambda_{j_1}   & \lambda_{j_2}  &0& 0& 0 &x_0 &x_1 & x_2 & 0 & 0 & 0  \\
    0& 0& 0 & 1 & 1  &  1& 0 & 0 & 0  &0  &0 &0    \\
  0& 0& 0 & \lambda_{t_0} & \lambda_{t_1} & \lambda_{t_2}&  0 & 0 & 0  &x_3  &x_4  & x_5    \\
 \lambda_{j_0}^2 &\lambda_{j_1}^2 &\lambda_{j_2}^2 &\lambda_{t_0}^2 &\lambda_{t_1}^2 &\lambda_{t_2}^2& 0& 0 & 0 & 0 & 0& 0\\
  \theta_i\lambda_{j_0}^{-1} &\theta_i\lambda_{j_1}^{-1} &\theta_i\lambda_{j_2}^{-1} &\theta_k\lambda_{t_0}^{-1} &\theta_k\lambda_{t_1}^{-1} &\theta_k\lambda_{t_2}^{-1}& 0& 0 & 0 & 0 & 0& 0\\
0   &0  &0 &0  &0& 0 &1&1&1& 0 & 0 & 0     \\
 0   &0  &0  & 0  &0& 0 &\lambda_{j_0} & \lambda_{j_1} &  \lambda_{j_2}& 0 & 0 & 0     \\
  0  &0& 0& 0 & 0 & 0  &0  &0 &0&1 & 1   &  1     \\
  0  &0& 0& 0 & 0 & 0 &   0 &0&0 &\lambda_{t_0}  & \lambda_{t_1}   &  \lambda_{t_2}     \\
  0&0&0&0&0&0& \lambda_{j_0}^2 &\lambda_{j_1}^2 &\lambda_{j_2}^2 &\lambda_{t_0}^2 &\lambda_{t_1}^2 &\lambda_{t_2}^2\\
  0&0&0&0&0&0& \theta_i\lambda_{j_0}^{-1} &\theta_i\lambda_{j_1}^{-1} &\theta_i\lambda_{j_2}^{-1} &\theta_k\lambda_{t_0}^{-1} &\theta_k\lambda_{t_1}^{-1} &\theta_k\lambda_{t_2}^{-1}\\
   \end{array}
 \right).
\end{equation}
\hrule
\end{figure*}
\begin{figure*}
\begin{equation}\label{Eqn_det_Ba_C2}
\det(B)=\det\left(\begin{array}{cc}
  \det\left(\begin{array}{ccc}
      1  & 1   &  1     \\
     \lambda_{j_0}  & \lambda_{j_1}   & \lambda_{j_2}    \\
     \lambda_{j_0}^2 &\lambda_{j_1}^2 &\lambda_{j_2}^2\\
  \end{array}\right)   & \det\left(\begin{array}{ccc}
      1  & 1   &  1     \\
      \lambda_{t_0} & \lambda_{t_1} & \lambda_{t_2}    \\
     \lambda_{t_0}^2 &\lambda_{t_1}^2 &\lambda_{t_2}^2\\
  \end{array}\right) \\
  \det\left(\begin{array}{ccc}
     1  & 1   &  1     \\
     \lambda_{j_0}  & \lambda_{j_1}   & \lambda_{j_2}    \\
     \theta_i\lambda_{j_0}^{-1} &\theta_i\lambda_{j_1}^{-1} &\theta_i\lambda_{j_2}^{-1} \\
  \end{array}\right)   &  \det\left(\begin{array}{ccc}
       1  & 1   &  1     \\
      \lambda_{t_0} & \lambda_{t_1} & \lambda_{t_2}    \\
     \theta_k\lambda_{t_0}^{-1} &\theta_k\lambda_{t_1}^{-1} &\theta_k\lambda_{t_2}^{-1}\\
  \end{array}\right)
\end{array}\right).
\end{equation}
\hrule
\end{figure*}

\begin{Theorem}\label{Thm-Con3}
Let $q$ be a prime power such that there exists a multiplicative subgroup $G$ of $\mathbf{F}_q\backslash\{0\}$ of size at least $n$ and with at least $\mu$ cosets. Choosing $\lambda_{i,t}$, $i\in [0, n)$, $t\in [0, r)$ from $G$, then the code in Construction \ref{Con-PMDS2} is a PMDS array code if $\theta_0,\ldots,\theta_{\mu-1}$ are elements from distinct cosets of $G$.
\end{Theorem}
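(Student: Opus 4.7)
The plan is to follow the template of the proof of Theorem~\ref{Eqn_Thm_C1} by verifying conditions i) and ii) of Definition~\ref{Def-PMDSA} in turn. For i), the parity-check matrix of every local code is $H$ in \eqref{Eqn-PC-new4}. A direct inspection of the two shapes of $A_i$ in \eqref{Eqn-PC-A} shows that $A_iA_j=A_jA_i$ for all $i,j$ and that $A_i-A_j$ is nonsingular whenever $\lambda_i\ne\lambda_j$, irrespective of the parities of the indices; hence Lemmas~\ref{Le-MDSA} and~\ref{Lemma-BlockV} together imply that each local code is an $[n,n-2,2]$ MDS array code.

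For ii), after decoding every local group with at most $r=2$ erasures, one is left with two cases: (a) $r+2=4$ extra coordinates all lying in one group, say group $i$, and (b) $r+1=3$ extra coordinates in each of two distinct groups $i$ and $k$. The key idea in both cases is to exploit the $\ell=2$ sub-packetization by splitting every $2\times 2$ block entry into its top/bottom scalar rows and left/right scalar columns and then permuting so that all top rows come first and all left columns come first. Because the only off-diagonal entry of any $A_i$ is the upper-right one (and only when $i$ is even), this permutation confines every such entry to the top-right quadrant of a $2\times 2$ block partition, while the two diagonal quadrants become identical. The permuted matrix therefore takes the block upper-triangular form $\begin{psmallmatrix}M & D\\ 0 & M\end{psmallmatrix}$ with determinant $\det(M)^2$, where $M$ is $4\times 4$ in case (a) and is precisely the $6\times 6$ matrix appearing in the top-left corner of \eqref{Eqn_hatH'_C3} in case (b).

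It remains to show that $\det(M)\ne 0$. In case (a), $M$ has rows $(1,\ldots,1)$, $(\lambda_{j_0},\ldots,\lambda_{j_3})$, $(\lambda_{j_0}^2,\ldots,\lambda_{j_3}^2)$, $(\theta_i\lambda_{j_0}^{-1},\ldots,\theta_i\lambda_{j_3}^{-1})$; transposing and rearranging the columns to place the Cauchy-type column first (and factoring out $\theta_i$) yields a Cauchy--Vandermonde matrix with nodes $c_k=\lambda_{j_k}$ and single pole $d_0=0$, so Lemma~\ref{Le_det_CV} evaluates $\det(M)=\pm\theta_i\prod_{0\le a<b<4}(\lambda_{j_b}-\lambda_{j_a})/\prod_k\lambda_{j_k}\ne 0$. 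In case (b), Lemma~\ref{Lemma-Det} with $r=2$ and $s=2$ reduces $\det(M)$ to the determinant of a $2\times 2$ matrix whose entries are $3\times 3$ sub-determinants; the two entries in the first row are pure Vandermonde determinants $V(J)$ and $V(T)$, while the two in the second row are Cauchy--Vandermonde determinants that Lemma~\ref{Le_det_CV} evaluates explicitly. A short computation then yields
\begin{equation*}
\det(M)=\pm\, V(J)\,V(T)\,\Bigl(\tfrac{\theta_i}{\lambda_{j_0}\lambda_{j_1}\lambda_{j_2}}-\tfrac{\theta_k}{\lambda_{t_0}\lambda_{t_1}\lambda_{t_2}}\Bigr).
\end{equation*}

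The anticipated main obstacle is the non-vanishing of the bracketed expression above, and this is exactly where the coset hypothesis enters: every $\lambda$ lies in the multiplicative subgroup $G$, so the ratio $(\lambda_{t_0}\lambda_{t_1}\lambda_{t_2})/(\lambda_{j_0}\lambda_{j_1}\lambda_{j_2})$ belongs to $G$, whereas $\theta_i/\theta_k\notin G$ because $\theta_i$ and $\theta_k$ lie in distinct cosets of $G$. Hence $\theta_i\lambda_{t_0}\lambda_{t_1}\lambda_{t_2}\ne \theta_k\lambda_{j_0}\lambda_{j_1}\lambda_{j_2}$, so $\det(M)\ne 0$ in case (b) as well, and condition ii) of Definition~\ref{Def-PMDSA} is established.
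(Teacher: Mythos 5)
Your proof is correct and follows essentially the same approach as the paper: verifying i) via the nonsingularity of $A_i-A_j$, then handling ii) in two cases by splitting each $2\times 2$ block into scalar rows and columns so that the permuted matrix becomes block upper-triangular $\begin{psmallmatrix}M & D\\0 & M\end{psmallmatrix}$ (the paper's \eqref{Eqn-M-C2a} and \eqref{Eqn_hatH'_C3} are exactly such matrices), reducing case (b) via Lemma~\ref{Lemma-Det} and the coset hypothesis. The only difference is cosmetic: for case (a) the paper shows $\det(\Lambda)\ne 0$ by the row/column scaling $\begin{psmallmatrix}&&&\theta_i^{-1}\\I&&&\end{psmallmatrix}\Lambda\,\mathrm{diag}(\lambda_{j_0},\ldots,\lambda_{j_3})$ being a Vandermonde matrix, whereas you recognize $\Lambda$ directly as (the transpose of) a Cauchy--Vandermonde matrix with single pole $d_0=0$ and invoke Lemma~\ref{Le_det_CV} — both elementary and equivalent.
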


\begin{proof}
By \eqref{Eqn-PC-A} and the fact that $\lambda_0,\ldots,\lambda_{n-1}$ are pairwise distinct nonzero elements in $\mathbf{F}_q$, we easily have that $A_i-A_j$ and therefore $\begin{pmatrix} 
I&I\\
A_i&A_j
\end{pmatrix}
$ are nonsingular for any $i,j\in [0,n)$ with $i\ne j$, therefore, i) of Definition \ref{Def-PMDSA} is satisfied.

Now let us check ii) of Definition \ref{Def-PMDSA}. 
Suppose there are two failed nodes in every local group and two more anywhere. Similar to the proof of Theorem \ref{Eqn_Thm_C1}, we only need to analyze the following two cases:
\begin{itemize}
    \item Both the two extra failed nodes are in the same local group, say group $i$. Assume that nodes $in+j_0,\ldots,in+j_{3}$ are failed, where $0\le j_0<\cdots<j_{3}< n$. Let $J=\{j_0,\ldots,j_{3}\}$, then the original file can be reconstructed if the matrix
$
\left(\begin{array}{c}
     H_i[J]  \\
    P_i[J]
\end{array}\right)
$
is of full rank.

Note that
\begin{align*}
&\left(\begin{array}{c}
     H_i[J]  \\
    P_i[J]
\end{array}\right)\\=& \left(
\begin{array}{cccc}
      I & I & I & I  \\
    A_{j_0} & A_{j_1} & A_{j_2} & A_{j_{3}}\\
    \lambda_{j_0}^{2} I & \lambda_{j_1}^{2} I & \lambda_{j_2}^{2} I & \lambda_{j_3}^{2} I\\
   \theta_i \lambda_{j_0}^{-1} I & \theta_i\lambda_{j_1}^{-1} I & \theta_i\lambda_{j_2}^{-1} I &\theta_i \lambda_{j_3}^{-1} I
\end{array}
\right)\\
 =&\begin{psmallmatrix}
     1&0 &1&0 &1&0 &1&0 \\
     0&1&0 &1&0 &1&0 &1\\
      \lambda_{j_0}&x_0 &\lambda_{j_1}&x_1 &\lambda_{j_2}&x_2 &\lambda_{j_3}&x_3\\
     0&\lambda_{j_0}&0 &\lambda_{j_1}&0 &\lambda_{j_2}&0 &\lambda_{j_3}\\
          \lambda_{j_0}^2&0 &\lambda_{j_1}^2&0 &\lambda_{j_2}^2&0 &\lambda_{j_3}^2&0 \\
          0&\lambda_{j_0}^2&0 &\lambda_{j_1}^2&0 &\lambda_{j_2}^2&0 &\lambda_{j_3}^2\\
 \theta_i\lambda_{j_0}^{-1}&0 &\theta_i\lambda_{j_1}^{-1}&0 &\theta_i\lambda_{j_2}^{-1}&0 &\theta_i\lambda_{j_3}^{-1}&0 \\
          0&\theta_i\lambda_{j_0}^{-1}&0 &\theta_i\lambda_{j_1}^{-1}&0 &\theta_i\lambda_{j_2}^{-1}&0 &\theta_i\lambda_{j_3}^{-1}
          \\
   \end{psmallmatrix},\\
\end{align*}
where $x_j=0, 1$.
By swapping the rows and columns, it is easy to see that $\left(\begin{array}{c}
     H_i[J]  \\
    P_i[J]
\end{array}\right)$ is equivalent to the following matrix
\begin{equation}\label{Eqn-M-C2a}
\begin{psmallmatrix}
     1 & 1 & 1 & 1  & 0 & 0 & 0 & 0\\
      \lambda_{j_0}  &\lambda_{j_1} &\lambda_{j_2} &\lambda_{j_3}  &x_0 &x_1 &x_2 &x_3\\
          \lambda_{j_0}^2 &\lambda_{j_1}^2 &\lambda_{j_2}^2 &\lambda_{j_3}^2 & 0 & 0 & 0 & 0\\
 \theta_i\lambda_{j_0}^{-1} &\theta_i\lambda_{j_1}^{-1} &\theta_i\lambda_{j_2}^{-1} &\theta_i\lambda_{j_3}^{-1} & 0 & 0 & 0 & 0\\
 0& 0& 0 & 0 &1 &1 &1 &1\\
      0& 0& 0 & 0 & \lambda_{j_0} &\lambda_{j_1} &\lambda_{j_2} &\lambda_{j_3}\\
       0& 0& 0 & 0 & \lambda_{j_0}^2 &\lambda_{j_1}^2 &\lambda_{j_2}^2 &\lambda_{j_3}^2\\
          0& 0& 0 & 0 & \theta_i\lambda_{j_0}^{-1} &\theta_i\lambda_{j_1}^{-1} &\theta_i\lambda_{j_2}^{-1} &\theta_i\lambda_{j_3}^{-1}
    \end{psmallmatrix},
\end{equation}
under elementary transformation.
The matrix in \eqref{Eqn-M-C2a} is nonsinglualr if
\begin{equation}\label{Eqn-M-C2b}
 \Lambda= \left(\begin{array}{cccc}
         1 &1 &1 &1\\
  \lambda_{j_0} &\lambda_{j_1} &\lambda_{j_2} &\lambda_{j_3}\\
      \lambda_{j_0}^2 &\lambda_{j_1}^2 &\lambda_{j_2}^2 &\lambda_{j_3}^2 \\
         \theta_i\lambda_{j_0}^{-1} &\theta_i\lambda_{j_1}^{-1} &\theta_i\lambda_{j_2}^{-1} &\theta_i\lambda_{j_3}^{-1}
    \end{array}\right)
\end{equation}
is nonsingular.
By $\theta_i\ne 0$, we have
\begin{align*}
&\left(\begin{array}{cccc}
          & & &\theta_i^{-1}\\
  1 & & &\\
       &1 & &\\
          & & 1 &
    \end{array}\right)\Lambda\left(\begin{array}{cccc}
  \lambda_{j_0}  & & &\\
       &\lambda_{j_1}  & &\\
          & &\lambda_{j_2}  &\\
           & & &\lambda_{j_3}
    \end{array}\right)\\
    =&\left(\begin{array}{cccc}
         1 &1 &1 &1\\
  \lambda_{j_0} &\lambda_{j_1} &\lambda_{j_2} &\lambda_{j_3}\\
      \lambda_{j_0}^2 &\lambda_{j_1}^2 &\lambda_{j_2}^2 &\lambda_{j_3}^2 \\
         \lambda_{j_0}^{3} &\lambda_{j_1}^{3} &\lambda_{j_2}^{3} &\lambda_{j_3}^{3}
    \end{array}\right),
\end{align*}
which is the transpose of a Vandermonde matrix and has the same rank as the matrix $\Lambda$ in \eqref{Eqn-M-C2b}, in conjunction with the facts that $\lambda_{j_0}, \lambda_{j_1}, \lambda_{j_2}, \lambda_{j_3}$ are pairwise distinct nonzero elements, we conclude that the matrix $
\left(\begin{array}{c}
     H_i[J]  \\
    P_i[J]
\end{array}\right)
$ is nonsingular.

\item

Suppose that the two extra failed nodes are in two different local groups, say group $i$ and group $k$, where $0\le i<k<\mu$. Assume that nodes $in+j_0, in+j_1, in+j_{2}$ and nodes $kn+t_0, kn+t_1, kn+t_{2}$ are failed, where $0\le j_0<j_1<j_{2}< n$ and $0\le t_0<t_1<t_{2}< n$. Let $J=\{j_0,j_1,j_{2}\}$ and $T=\{t_0, t_1, t_{2}\}$, then the original file can be reconstructed if the following matrix
\begin{align*}
\hat{\mathcal{H}}=&\begin{pmatrix}
H_i[J]&\\
& H_{k}[T]\\
P_i[J]& P_{k}[T]
\end{pmatrix}\\
=&
\begin{psmallmatrix}
      I & I & I &&& \\
    A_{j_0} & A_{j_1} & A_{j_{2}}&&&\\
 &&&  I & I  & I   \\
  &&&  A_{t_0} & A_{t_1} &  A_{t_{2}}\\
   \lambda_{j_0}^{2}I & \lambda_{j_1}^{2} I  & \lambda_{j_{2}}^{2} I &  \lambda_{t_0}^{2} I & \lambda_{t_1}^{2} I & \lambda_{t_{2}}^{2} I\\
    \theta_i \lambda_{j_0}^{-1}I &\theta_i \lambda_{j_1}^{-1}I &\theta_i \lambda_{j_{2}}^{-1}I & \theta_{k} \lambda_{t_0}^{-1} I &\theta_{k} \lambda_{t_1}^{-1} I &\theta_{k} \lambda_{t_{2}}^{-1} I
\end{psmallmatrix}
\end{align*}
is nonsingular. Substituting \eqref{Eqn-PC-A} into $\hat{\mathcal{H}}$, we have \eqref{Eqn_hatH_C2},
where $x_0,\ldots,x_5=0$ or $1$. By swapping the rows and columns of $\hat{\mathcal{H}}$ in \eqref{Eqn_hatH_C2}, it is easy to see that $\hat{\mathcal{H}}$ is equivalent to the matrix in \eqref{Eqn_hatH'_C3}
under elementary transformation,
which is nonsingular if the following matrix
\begin{equation*}
B\hspace{-1mm}=\hspace{-1.5mm}\begin{pmatrix}\hspace{-1.3mm}
        1  & 1   &  1  &0& 0& 0    \\
     \lambda_{j_0}  & \lambda_{j_1}   & \lambda_{j_2}  &0& 0& 0  \\
    0& 0& 0 & 1 & 1  &  1   \\
  0& 0& 0 & \lambda_{t_0} & \lambda_{t_1} & \lambda_{t_2}\\
 \lambda_{j_0}^2 &\lambda_{j_1}^2 &\lambda_{j_2}^2 &\lambda_{t_0}^2 &\lambda_{t_1}^2 &\lambda_{t_2}^2\\
  \theta_i\lambda_{j_0}^{-1} &\theta_i\lambda_{j_1}^{-1} &\theta_i\lambda_{j_2}^{-1} &\theta_k\lambda_{t_0}^{-1} &\theta_k\lambda_{t_1}^{-1} &\theta_k\lambda_{t_2}^{-1}\\
\hspace{-1.9mm}\end{pmatrix}
\end{equation*}
is nonsingular.

By Lemma \ref{Lemma-Det}, we have \eqref{Eqn_det_Ba_C2}.
By factoring out the nonzero Vandermonde determinant from each column of the determinant in \eqref{Eqn_det_Ba_C2}, we further have
\begin{equation*}
\det(B)\ne 0\iff \det\left(
\begin{array}{cc}
  \lambda_{j_0}\lambda_{j_1}\lambda_{j_2} & \lambda_{t_0}\lambda_{t_1}\lambda_{t_2}    \\
  \theta_i  & \theta_k
\end{array}
\right)\ne 0,
\end{equation*}
where the last inequality holds since both $\lambda_{j_0}\lambda_{j_1}\lambda_{j_2}$ and $\lambda_{t_0}\lambda_{t_1}\lambda_{t_2}$ are in the subgroup $G$, while  $\theta_i$ and $\theta_{k}$ are in different cosets of $G$.

Therefore, $B$ and thus $\hat{\mathcal{H}}$ is nonsingular, and the original file can be reconstructed.
\end{itemize}
This finishes the proof.
\end{proof}

\begin{Remark}
Although Construction \ref{Con-PMDS2} only supports two local parities, it has the following novelty.
\begin{itemize}
  \item  It is the first time to use non-diagonal matrices (cf. \eqref{Eqn-PC-A}) as building blocks in PMDS array codes besides diagonal matrices, i.e., it is not required that each sub-stripe of the local code be a scalar MDS code as in \cite{holzbaur2021partial}, which is the key to  reduce the rebuilding access.
  \item  For the local code of the new PMDS array code in Construction \ref{Con-PMDS2}, its rebuilding access is around $0.75$ (more precisely, $\frac{3n/2-2}{2n-4}$) times that of a Reed-Solomon code with the same parameters. In \cite{wu2021achievable}, a tight lower bound of the average rebuilding access of $[n, n-2]$ MDS array codes with sub-packetization level $2$ as well as the optimal code construction was derived. Although the exact expression of the general lower bound is complicated, it was shown that the  average rebuilding access of $[n, n-2]$ MDS array codes with sub-packetization level $2$ is larger than $0.72$ times that of an RS code with the same parameters for $n\le 50$. This shows that the local code of the new PMDS array code in Construction \ref{Con-PMDS2} that we choose has rebuilding access which is about $1.04$ times the lower bound in \cite{wu2021achievable} for $n\le 50$.

Of course, we can choose an MDS array code with the optimal average rebuilding access in \cite{wu2021achievable} as the local code. However, it will be very difficult to verify ii) of Definition \ref{Def-PMDSA}, which will be left as our future research.
\end{itemize}
\end{Remark}

\section{A new $(\mu, n; r, s=3)$ PMDS array code construction}\label{sec:code_r=3}

In this subsection, we propose a new $(\mu, n; r, s=3)$ PMDS array code construction, where each $P_i$ in \eqref{Eqn-PC-PMDS} has three block rows. To prove that an array code defined by the parity-check matrix in \eqref{Eqn-PC-PMDS} is a PMDS array code, similar to the previous sections, we need to calculate the determinants of
the sub-block matrices that are formed by any $r+3$ thick columns of $\begin{pmatrix}
H_i\\
P_i
\end{pmatrix}
$, any $r+2$ thick columns of $\begin{pmatrix}
H_i\\
P_i^{[T]}
\end{pmatrix}$ ($T\subset [0, 3)$, $|T|=2$), and any $r+1$ thick columns of $\begin{pmatrix}
H_i\\
P_i^{[t]}
\end{pmatrix}
$ ($t=0,1,2$). If we define $P_i$ similar to that in \eqref{Eqn-PC-new3}, e.g., by adding a block row 
\begin{equation*}
\left(\begin{array}{cccc}
    A_0^{r+1} & A_1^{r+1} & \cdots & A_{n-1}^{r+1}
\end{array}\right)
\end{equation*}
 or 
 \begin{equation*}
 \left(\begin{array}{cccc}
    \theta'_{i}A_0^{-2} & \theta'_{i}A_1^{-2} & \cdots & \theta'_{i}A_{n-1}^{-2}
\end{array}\right),
\end{equation*} then some of the sub-block matrices mentioned above will not be equivalent to a block Vandermonde matrix anymore, and their determinants will be hard to calculate. 
By defining $P_i$ as $\ell$ matrices of order $3\times \ell$,  with each $3\times \ell$ matrix being a sub-matrix of a Cauchy-Vandermonde matrix, the above concern can be addressed by applying Lemma \ref{Le_det_CV}.

Following the notation in Construction \ref{Con-C5},
we define two variants of the matrices $A_i$ ($i\in [0, n)$) in the following. For $i\in [0, n)$, let $A'_i$ and $A''_i$ be $\ell\times \ell$ matrices that satisfy \begin{equation}\label{Eqn Hadamard coding matrix-N1}
\left(
                     \begin{array}{c}
                       V_{\overline{i},0} \\
                       V_{\overline{i},1} \\
                       \vdots \\
                       V_{\overline{i},r-1}
                     \end{array}
                   \right)A'_i=\left(
                     \begin{array}{c}
                    \frac{1}{\lambda_{i,0}-d_0}   V_{\overline{i},0} \\
                      \frac{1}{\lambda_{i,1}-d_0}  V_{\overline{i},1} \\
                       \vdots \\
                     \frac{1}{\lambda_{i,r-1}-d_0}  V_{\overline{i},r-1}
                     \end{array}
                   \right),
\end{equation}
and
\begin{equation}\label{Eqn Hadamard coding matrix-N2}
\left(
                     \begin{array}{c}
                       V_{\overline{i},0} \\
                       V_{\overline{i},1} \\
                       \vdots \\
                       V_{\overline{i},r-1}
                     \end{array}
                   \right)A''_i=\left(
                     \begin{array}{c}
                    \frac{1}{\lambda_{i,0}-d_1}   V_{\overline{i},0} \\
                      \frac{1}{\lambda_{i,1}-d_1}  V_{\overline{i},1} \\
                       \vdots \\
                     \frac{1}{\lambda_{i,r-1}-d_1}  V_{\overline{i},r-1}
                     \end{array}
                   \right),
\end{equation}
where $d_0, d_1 \in \mathbf{F}_q\backslash\{\lambda_{i,t}, i\in [0, n), t\in [0, r)\}$ and $d_0\ne d_1$.

Similarly, we have that $A'_i$ and $A''_i$ are diagonal matrices and the $a$-th rows of $A'_i$ and $A''_i$ are
\begin{equation}\label{Eqn-row-of-A'}
e_aA'_i=\frac{1}{\lambda_{i, a_{\overline{i}}}-d_0}e_a ~\mbox{and}~ e_aA''_i=\frac{1}{\lambda_{i, a_{\overline{i}}}-d_1}e_a
\end{equation}
respectively,
where $i\in [0, n)$, $a\in [0, \ell)$, and $a_{\overline{i}}$ denotes the $\overline{i}$-th element in the $r$-ary expansion of $a$.

\begin{Construction}\label{Con-news3}
Let $\mu, r, n', n$ be four positive integers, where  $\mu, r\ge2$ and $n>n'$, and let $\ell=r^{n'}$. We construct a new $[\mu n, \mu (n-r)-3,\ell]$ array code over $\mathbf{F}_q$ with the parity-check matrix having the form as in \eqref{Eqn-PC-PMDS} with
\begin{equation}\label{Eqn-PC-con3-1}
H_i=H,
\end{equation}
and
\begin{equation}\label{Eqn-PC-con3-2}
P_i=\left(\begin{array}{cccc}
    A_0^{r} & A_1^{r} & \cdots & A_{n-1}^{r} \\
    \theta_{i}A'_0 & \theta_{i}A'_1& \cdots & \theta_{i}A'_{n-1} \\
    \delta_{i}A''_0 & \delta_{i}A''_1& \cdots & \delta_{i}A''_{n-1}
\end{array}\right)
\end{equation}
for $i\in [0, \mu)$,
where $H$ is defined in \eqref{Eqn-PC-C5}, i.e., the parity-check matrix of the code in Construction \ref{Con-C5},
$A_i$, $A'_i$, and $A''_i$ are $\ell\times \ell$ matrices defined in \eqref{Eqn Hadamard coding matrix}, \eqref{Eqn Hadamard coding matrix-N1}, and \eqref{Eqn Hadamard coding matrix-N2}, respectively, $\theta_i, \delta_i\in \mathbf{F}_q\backslash\{0\}$ for $i\in [0, \mu)$.
\end{Construction}

\begin{Theorem}\label{Eqn_Thm_C3}
The code in Construction \ref{Con-news3} is a $(\mu, n; r, s=3)$  PMDS array code over $\mathbf{F}_{q}$ with sub-packetization level $\ell$ if the following conditions C1--C4 hold.
\begin{itemize}
    \item [C1.] $q=q_0^3$, where $q_0>\mu(\Phi+1)$ is a prime power such that there exists a multiplicative subgroup $G$ of $\mathbf{F}_{q_0}\backslash\{0\}$ of size at least $\Phi+1$ and with at least $\mu$ cosets, where $\Phi$ is defined in \eqref{Eqn-Phi}.
\item [C2.] $d_0$ and $d_1$ are two distinct elements chosen from $\mathbf{F}_{q_0}$. $\lambda_{i,t}$, $i\in [0, n)$, $t\in [0, r)$ are chosen from $\Omega\backslash\{d_0\}$ such that R1-R3 of Lemma \ref{lem repair} hold, where
\begin{equation*}
\Omega=\{\lambda\in \mathbf{F}_{q_0}:\frac{1}{\lambda-d_1}\in G\},
\end{equation*}
and clearly $|\Omega|\ge \Phi+1$.
\item [C3.] $\Theta=\{\theta_0,\theta_1, \ldots, \theta_{\mu-1}\}\subset \mathbf{F}_q\backslash\{0\}$ is $3$--wise independent over $\mathbf{F}_{q_0}$, i.e., any $t$-subset of $\Theta$ with $t\le 3$ is linearly independent over $\mathbf{F}_{q_0}$.
\item [C4.] $\delta_0,\delta_1, \ldots, \delta_{\mu-1}\in \mathbf{F}_{q_0}$ are elements from distinct cosets of $G$.
\end{itemize}
In addition, the repair bandwidth and the rebuilding access of node $in+j$ are
\begin{equation*}
 \gamma_{in+j} \hspace{-1mm}=\hspace{-1.4mm}\left\{\hspace{-2.4mm}
                   \begin{array}{ll}
(1\hspace{-.4mm}+\hspace{-.4mm}\frac{(\lceil\frac{n}{n'}\rceil-1)(r-1)}{n-1})\frac{\ell}{r}(n-1), \hspace{-.9mm}&\mbox{if\ } 0\hspace{-.4mm}\le\hspace{-.4mm} j\% n' \hspace{-.6mm}< \hspace{-.6mm}n\%n', \\
                       (1\hspace{-.4mm}+\hspace{-.4mm}\frac{(\lfloor\frac{n}{n'}\rfloor-1)(r-1)}{n-1})\frac{\ell}{r}(n-1),\hspace{-1mm} & \mbox{otherwise},
                      \end{array}
                    \right.
\end{equation*}
and
\begin{equation*}
\Gamma_{in+j}=\ell(n-1)
\end{equation*}
for $i\in [0, \mu)$ and $j\in [0, n)$, respectively.
\end{Theorem}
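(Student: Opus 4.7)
\textbf{Proof plan for Theorem~\ref{Eqn_Thm_C3}.} The plan is to follow the two–part template of Theorem~\ref{Eqn_Thm_C1}: first verify condition i) of Definition~\ref{Def-PMDSA} together with the repair bandwidth/rebuilding access claim via the local code structure, then verify ii) by a case analysis of the global erasures. Since \eqref{Eqn-PC-con3-1} sets $H_i=H$, every local code is precisely the code of Construction~\ref{Con-C5}. Requirements R1--R3 in Lemma~\ref{lem repair} hold by C2 (note that $|\Omega|\ge \Phi+1>\Phi$ and C1 guarantees enough cosets of $G$ in $\mathbf{F}_{q_0}$ to choose the $\lambda_{i,t}$'s according to Lemma~\ref{lem-C5-q}). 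Hence i) holds and the formulas for $\gamma_{in+j}$ and $\Gamma_{in+j}$ follow verbatim from Lemmas~\ref{lem repair} and \ref{lem-C5-q}.

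For ii), after repairing all groups that suffered no more than $r$ erasures, the three remaining erasures are distributed among the groups as either (1) all three in one group, (2) two in one group and one in another, or (3) one erasure in each of three distinct groups. In each case I will write down the corresponding $\hat{\mathcal H}$ analogous to that in the proof of Theorem~\ref{Eqn_Thm_C1}. Since $A_i$, $A'_i$, $A''_i$ are simultaneously diagonal (cf.~\eqref{Eqn-row-of-A} and \eqref{Eqn-row-of-A'}) and pairwise commuting, swapping rows and columns decomposes $\hat{\mathcal H}$ into a block-diagonal matrix $\mathrm{blkdiag}(B_0,\ldots,B_{\ell-1})$; it then suffices to show $\det(B_a)\ne 0$ for every $a\in[0,\ell)$.

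In Case~(1) the block $B_a$ is an $(r+3)\times(r+3)$ matrix whose first $r+1$ rows are a Vandermonde in the $\lambda_{j_s,a_{\overline{j_s}}}$'s and whose last two rows carry the Cauchy denominators $(\lambda_{j_s,a_{\overline{j_s}}}-d_0)^{-1}$ and $(\lambda_{j_s,a_{\overline{j_s}}}-d_1)^{-1}$. After factoring out $\theta_i\delta_i\ne 0$ and transposing, this is exactly a Cauchy--Vandermonde matrix, and Lemma~\ref{Le_det_CV} together with R1--R3 (which force the $\lambda_{j_s,a_{\overline{j_s}}}$ to be pairwise distinct and different from $d_0,d_1$ by C2) gives $\det(B_a)\ne 0$. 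In Case~(2), with group $i$ contributing $r{+}1$ columns and group $k$ contributing $r{+}2$, Lemma~\ref{Lemma-Det2} reduces $\det(B_a)\ne 0$ to a three-term scalar identity; each $2\times 2$ or $3\times 3$ Cauchy--Vandermonde sub-determinant can be evaluated by Lemma~\ref{Le_det_CV}, and after extracting the nonzero scalar factors the identity becomes an $\mathbf{F}_{q_0}$-linear form in $\theta_i$ and $\theta_k$ with coefficients in $\mathbf{F}_{q_0}$; the coefficient of $\theta_i$ is a product of three nonzero Cauchy--Vandermonde determinants times $\delta_k$, so $\mathbf{F}_{q_0}$-linear independence of $\{\theta_i,\theta_k\}$ (guaranteed by C3) forces the whole expression to be nonzero.

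Case~(3) is the main obstacle. Here $B_a$ has three local Vandermonde blocks glued by three global rows, so Lemma~\ref{Lemma-Det} expresses $\det(B_a)$ as the determinant of a $3\times 3$ matrix whose entries are $(r+1)\times(r+1)$ Cauchy--Vandermonde determinants. Writing $V_s$, $W_s$, $U_s\in\mathbf{F}_{q_0}$ for the three types of Cauchy--Vandermonde determinants coming from group $i_s$ (pure Vandermonde, one Cauchy row with $d_0$, one with $d_1$), Lemma~\ref{Le_det_CV} gives the uniform relation $U_s/V_s=\prod_t(\lambda^{(s)}_t-d_1)^{-1}\in G$, using precisely that $\lambda_{i,t}\in\Omega$ by C2. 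Factoring $\theta_{i_s}$ out of row~2 and $\delta_{i_s}$ out of row~3, $\det(B_a)$ becomes (up to nonzero factors) a trilinear form in the $\theta_{i_s}$'s. By the $3$-wise $\mathbf{F}_{q_0}$-independence of $\Theta$ (C3), $\det(B_a)=0$ would force each coefficient in $\mathbf{F}_{q_0}$ to vanish; each such coefficient has the shape $V_{s'}V_{s''}(\delta_{i_{s'}}g_{s'}-\delta_{i_{s''}}g_{s''})$ with $g_s=U_s/V_s\in G$, so its vanishing would give $\delta_{i_{s'}}/\delta_{i_{s''}}=g_{s''}/g_{s'}\in G$, contradicting C4. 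Hence $\det(B_a)\ne 0$ in every case, $\hat{\mathcal H}$ is nonsingular, and the code is PMDS.
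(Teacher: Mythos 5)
Your proposal follows essentially the same approach as the paper's proof: condition i) and the repair/access formulas via the local-code structure (Lemmas~\ref{lem repair}, \ref{lem-C5-q}), and condition ii) via the identical three-case split in which Lemmas~\ref{Le_det_CV}, \ref{Lemma-Det2}, and \ref{Lemma-Det} reduce $\det(B_a)\ne 0$ to an $\mathbf{F}_{q_0}$-linear form in the $\theta$'s whose coefficients are nonzero because the Cauchy factors $\prod(\lambda-d_1)^{-1}$ lie in $G$ while the $\delta$'s sit in distinct cosets, and C3 supplies independence. The only slips are cosmetic and harmless: in Case~2 the coefficient of $\theta_i$ is $\delta_k$ times two (not three) Cauchy--Vandermonde determinants, and in Case~3 the $\theta_{i_s},\delta_{i_s}$ are really column-wise factors (each column carries its own $\theta_{i_s}$ and $\delta_{i_s}$), not factors of rows~2 and~3.
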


\begin{proof}
Firstly, we introduce a method to choose the subset $\{\theta_0, \theta_1,\ldots, \theta_{\mu-1}\}$ in \cite{gopi2020maximally}. Let $v_0,v_1,v_2$ be a basis of $\mathbf{F}_q$ over $\mathbf{F}_{q_0}$, and let $\xi_0,\xi_1,\ldots,\xi_{\mu-1}$ be pairwise distinct elements in $\mathbf{F}_{q_0}$. Define
\begin{equation*}
\theta_i= v_0+\xi_i v_1+ \xi_i^2 v_2,~i\in [0, \mu),
\end{equation*}
then
$\{\theta_0, \theta_1,\ldots, \theta_{\mu-1}\}$
is $3$--wise independent over $\mathbf{F}_{q_0}$.

Secondly, we focus on the repair property of the new code. By \eqref{Eqn-PC-PMDS} and \eqref{Eqn-PC-con3-1}, we see that each local code is an MDS array code defined by the parity-check matrix $H$ of the code in Construction \ref{Con-C5}, i.e., i) of Definition \ref{Def-PMDSA} is satisfied,
and the result on the repair bandwidth and rebuilding access are derived directly according to Lemmas \ref{lem repair} and \ref{lem-C5-q}.

Lastly, let us prove that ii) of Definition \ref{Def-PMDSA} is also satisfied.
Suppose that there are $r$ failed nodes in every local group and three more anywhere. Similarly, we only need to analyze the following three cases:
\begin{itemize}
    \item All of the three extra failed nodes are in the same local group, say group $i$ and assume that nodes $in+j_0,\ldots,in+j_{r+2}$ are failed, where $0\le j_0<\cdots<j_{r+2}< n$. Let $J=\{j_0,\ldots,j_{r+2}\}$, then the original file can be reconstructed if the following matrix
\begin{equation*}
\left(\begin{array}{c}
     H_i[J]  \\
    P_i[J]
\end{array}\right)=\left(
\begin{array}{cccc}
      I & I & \cdots & I  \\
    A_{j_0} & A_{j_1} & \cdots & A_{j_{r+2}}\\
    \vdots & \vdots & \ddots & \vdots\\
    A_{j_0}^{r-1} & A_{j_1}^{r-1} & \cdots & A_{j_{r+2}}^{r-1} \\
    A_{j_0}^{r} & A_{j_1}^{r} & \cdots & A_{j_{r+2}}^{r} \\
    \theta_i A'_{j_0} &\theta_i A'_{j_1} & \cdots &\theta_i A'_{j_{r+2}}\\
    \delta_i A''_{j_0} &\delta_i A''_{j_1} & \cdots &\delta_i A''_{j_{r+2}}\\
\end{array}
\right)
\end{equation*}
is of full rank. Clearly, each block entry in $\left(\begin{array}{c}
     H_i[J]  \\
    P_i[J]
\end{array}\right)$ is an $\ell\times \ell$ diagonal matrix, by swapping the rows and columns, $\left(\begin{array}{c}
     H_i[J]  \\
    P_i[J]
\end{array}\right)$ is equivalent to
\begin{equation*}
\mbox{blkdiag}(B_0, B_1, \ldots, B_{\ell-1})
\end{equation*}
under elementary transformation, which has the same rank as $\left(\begin{array}{c}
     H_i[J]  \\
    P_i[J]
\end{array}\right)$, where $B_a$ is formed by the $a, a+\ell, \ldots, (a+(r+2)\ell)$-th rows and the $a, a+\ell, \ldots, (a+(r+2)\ell)$-th columns of $\left(\begin{array}{c}
     H_i[J]  \\
    P_i[J]
\end{array}\right)$ for $a\in [0, \ell)$, i.e.,
\begin{equation*}
B_a=
\begin{pmatrix}
      1 & 1 & \cdots & 1  \\
    \lambda_{j_0,a_{\overline{j_0}}} &  \lambda_{j_1,a_{\overline{j_1}}} & \cdots & \lambda_{j_{r+2},a_{\overline{j_{r+2}}}} \\
    \vdots & \vdots & \ddots & \vdots\\
     \lambda_{j_0,a_{\overline{j_0}}}^{r-1} &  \lambda_{j_1,a_{\overline{j_1}}}^{r-1} & \cdots & \lambda_{j_{r+2},a_{\overline{j_{r+2}}}}^{r-1} \\
   \lambda_{j_0,a_{\overline{j_0}}}^{r} &  \lambda_{j_1,a_{\overline{j_1}}}^{r} & \cdots & \lambda_{j_{r+2},a_{\overline{j_{r+2}}}}^{r} \\  \frac{\theta_i}{\lambda_{j_0,a_{\overline{j_0}}}-d_0} &  \frac{\theta_i}{\lambda_{j_1,a_{\overline{j_1}}}-d_0}  & \cdots & \frac{\theta_i}{\lambda_{j_{r+2},a_{\overline{j_{r+2}}}}-d_0}
    \\
 \frac{\delta_i}{\lambda_{j_0,a_{\overline{j_0}}}-d_1} &  \frac{\delta_i}{\lambda_{j_1,a_{\overline{j_1}}}-d_1}  & \cdots & \frac{\delta_i}{\lambda_{j_{r+2},a_{\overline{j_{r+2}}}}-d_1}
\end{pmatrix}
\end{equation*}
by \eqref{Eqn-row-of-A} and \eqref{Eqn-row-of-A'}.
which is a transpose of a Cauchy-Vandermonde matrix (after scaling and permuting rows) and thus is nonsingular by Lemma \ref{Le_det_CV} and C2. Therefore, $\left(\begin{array}{c}
     H_i[J]  \\
    P_i[J]
\end{array}\right)$ is nonsingular and the original file can be reconstructed.

\item
The three extra node failures are in two different groups, say one in group $i$ and two in group $k$, and assume that nodes $in+j_0,\ldots,in+j_{r}$ and nodes $kn+t_0,\ldots,kn+t_{r+1}$ are failed, where $0\le i, k<\mu$, $0\le j_0<\cdots<j_{r}< n$ and $0\le t_0<\cdots<t_{r+1}< n$. Let $J=\{j_0,\ldots,j_{r}\}$ and $T=\{t_0,\ldots,t_{r+1}\}$, then the original file can be reconstructed if the following matrix
\begin{align*}
\hat{\mathcal{H}}=&\begin{pmatrix}
H_i[J]&\\
& H_{k}[T]\\
P_i[J]& P_{k}[T]
\end{pmatrix}\\=&
\begin{psmallmatrix}
      I & I & \cdots & I  &&&& \\
    A_{j_0} & A_{j_1} & \cdots & A_{j_{r}}&&&&\\
    \vdots & \vdots & \ddots & \vdots&&&&\\
    A_{j_0}^{r-1} & A_{j_1}^{r-1} & \cdots & A_{j_{r}}^{r-1} &&&&\\
  &&&&  I & I & \cdots & I   \\
  &&&&  A_{t_0} & A_{t_1} & \cdots & A_{t_{r+1}}\\
  &&&&  \vdots & \vdots & \ddots & \vdots\\
   &&&& A_{t_0}^{r-1} & A_{t_1}^{r-1} & \cdots & A_{t_{r+1}}^{r-1} \\
   A_{j_0}^{r} & A_{j_1}^{r} & \cdots & A_{j_{r}}^{r} &  A_{t_0}^{r} & A_{t_1}^{r} & \cdots & A_{t_{r+1}}^{r} \\
     \theta_i A'_{j_0} &\theta_i A'_{j_1} & \cdots &\theta_i A'_{j_{r}} & \theta_k A'_{t_0} &\theta_k A'_{t_1} & \cdots &\theta_k A'_{t_{r+1}}\\
    \delta_i A''_{j_0} &\delta_i A''_{j_1} & \cdots &\delta_i A''_{j_{r}} & \delta_k A''_{t_0} &\delta_k A''_{t_1} & \cdots &\delta_k A''_{t_{r+1}}\\
\end{psmallmatrix}
\end{align*}
is nonsingular.

Similarly, by swapping the rows and columns of $\hat{\mathcal{H}}$, $\hat{\mathcal{H}}$ is equivalent to
\begin{equation*}
\mbox{blkdiag}(B_0, B_1, \ldots, B_{\ell-1})
\end{equation*}
under elementary transformation, which has the same rank as $\hat{\mathcal{H}}$, where $B_a$ is formed by the $a, a+\ell, \ldots, (a+(2r+2)\ell$)-th rows and the $a, a+\ell, \ldots, (a+(2r+2)\ell)$-th columns of $\hat{\mathcal{H}}$, and can be expressed as in \eqref{Eqn_Ba_C3} according to \eqref{Eqn-row-of-A}.
\begin{figure*}
\begin{equation}\label{Eqn_Ba_C3}
B_a=\begin{pmatrix}
      1 & 1 & \cdots & 1  &&&& \\
    \lambda_{j_0,a_{\overline{j_0}}} &  \lambda_{j_1,a_{\overline{j_1}}} & \cdots & \lambda_{j_r,a_{\overline{j_r}}} &&&&\\
    \vdots & \vdots & \ddots & \vdots&&&&\\
     \lambda_{j_0,a_{\overline{j_0}}}^{r-1} &  \lambda_{j_1,a_{\overline{j_1}}}^{r-1} & \cdots & \lambda_{j_r,a_{\overline{j_r}}}^{r-1} &&&&\\
  &&&&  1 & 1 & \cdots & 1   \\
  &&&&  \lambda_{t_0,a_{\overline{t_0}}} &  \lambda_{t_1,a_{\overline{t_1}}} & \cdots & \lambda_{t_{r+1},a_{\overline{t_{r+1}}}}\\
  &&&&  \vdots & \vdots & \ddots & \vdots\\
   &&&& \lambda_{t_0,a_{\overline{t_0}}}^{r-1} &  \lambda_{t_1,a_{\overline{t_1}}}^{r-1} & \cdots & \lambda_{t_{r+1},a_{\overline{t_{r+1}}}}^{r-1}\\
   \lambda_{j_0,a_{\overline{j_0}}}^{r} &  \lambda_{j_1,a_{\overline{j_1}}}^{r} & \cdots & \lambda_{j_r,a_{\overline{j_r}}}^{r} &  \lambda_{t_0,a_{\overline{t_0}}}^{r} &  \lambda_{t_1,a_{\overline{t_1}}}^{r} & \cdots & \lambda_{t_{r+1},a_{\overline{t_{r+1}}}}^{r} \\
    \frac{\theta_i}{\lambda_{j_0,a_{\overline{j_0}}}-d_0}     &   \frac{\theta_i}{\lambda_{j_1,a_{\overline{j_1}}}-d_0} & \cdots &  \frac{\theta_i}{\lambda_{j_r,a_{\overline{j_r}}}-d_0}&  \frac{\theta_{k}}{\lambda_{t_0,a_{\overline{t_0}}}-d_0} &  \frac{\theta_{k}}{\lambda_{t_1,a_{\overline{t_1}}}-d_0} & \cdots & \frac{\theta_{k}}{\lambda_{t_{r+1},a_{\overline{t_{r+1}}}}-d_0}\\
    \frac{\delta_i}{\lambda_{j_0,a_{\overline{j_0}}}-d_1}     &   \frac{\delta_i}{\lambda_{j_1,a_{\overline{j_1}}}-d_1} & \cdots &  \frac{\delta_i}{\lambda_{j_r,a_{\overline{j_r}}}-d_1}&  \frac{\delta_{k}}{\lambda_{t_0,a_{\overline{t_0}}}-d_1} &  \frac{\delta_{k}}{\lambda_{t_1,a_{\overline{t_1}}}-d_1} & \cdots & \frac{\delta_{k}}{\lambda_{t_{r+1},a_{\overline{t_{r+1}}}}-d_1}
\end{pmatrix}, ~a\in [0, \ell).
\end{equation}
\hrule
 \end{figure*}

By Lemma \ref{Lemma-Det2},
for $a\in [0, \ell)$,  $\det(B_a)\ne 0$ is equivalent to the inequality in \eqref{Eqn_Ka_C3} in the next page.
\begin{figure*}
\begin{align}\label{Eqn_Ka_C3}
\nonumber K_a=&\det\left(\begin{array}{cccc}
             1 & 1 & \cdots & 1   \\
    \lambda_{j_0,a_{\overline{j_0}}} &  \lambda_{j_1,a_{\overline{j_1}}} & \cdots & \lambda_{j_r,a_{\overline{j_r}}} \\
    \vdots & \vdots & \ddots & \vdots\\
     \lambda_{j_0,a_{\overline{j_0}}}^{r} &  \lambda_{j_1,a_{\overline{j_1}}}^{r} & \cdots & \lambda_{j_r,a_{\overline{j_r}}}^{r} \\
         \end{array}\right)\det\left(\begin{array}{cccc}
              1 & 1 & \cdots & 1   \\
    \lambda_{t_0,a_{\overline{t_0}}} &  \lambda_{t_1,a_{\overline{t_1}}} & \cdots & \lambda_{t_{r+1},a_{\overline{t_{r+1}}}}\\
    \vdots & \vdots & \ddots & \vdots\\
    \lambda_{t_0,a_{\overline{t_0}}}^{r-1} &  \lambda_{t_1,a_{\overline{t_1}}}^{r-1} & \cdots & \lambda_{t_{r+1},a_{\overline{t_{r+1}}}}^{r-1}\\
      \frac{\theta_{k}}{\lambda_{t_0,a_{\overline{t_0}}}-d_0} &  \frac{\theta_{k}}{\lambda_{t_1,a_{\overline{t_1}}}-d_0} & \cdots & \frac{\theta_{k}}{\lambda_{t_{r+1},a_{\overline{t_{r+1}}}}-d_0}\\
  \frac{\delta_{k}}{\lambda_{t_0,a_{\overline{t_0}}}-d_1} &  \frac{\delta_{k}}{\lambda_{t_1,a_{\overline{t_1}}}-d_1} & \cdots & \frac{\delta_{k}}{\lambda_{t_{r+1},a_{\overline{t_{r+1}}}}-d_1}
         \end{array}\right)\\
\nonumber  &- \theta_i\delta_{k}\det\underbrace {\left(\hspace{-1mm}\begin{array}{cccc}
             1 & 1 & \cdots & 1   \\
    \lambda_{j_0,a_{\overline{j_0}}} &  \lambda_{j_1,a_{\overline{j_1}}} & \cdots & \lambda_{j_r,a_{\overline{j_r}}} \\
    \vdots & \vdots & \ddots & \vdots\\
     \lambda_{j_0,a_{\overline{j_0}}}^{r-1} &  \lambda_{j_1,a_{\overline{j_1}}}^{r-1} & \cdots & \lambda_{j_r,a_{\overline{j_r}}}^{r-1} \\
    \frac{1}{\lambda_{j_0,a_{\overline{j_0}}}-d_0}     &   \frac{1}{\lambda_{j_1,a_{\overline{j_1}}}-d_0} & \cdots &  \frac{1}{\lambda_{j_r,a_{\overline{j_r}}}-d_0}
         \end{array}\hspace{-1mm}\right)}_{A} \det\underbrace{\left(\hspace{-1mm}\begin{array}{cccc}
              1 & 1 & \cdots & 1   \\
    \lambda_{t_0,a_{\overline{t_0}}} &  \lambda_{t_1,a_{\overline{t_1}}} & \cdots & \lambda_{t_r,a_{\overline{t_r}}}\\
    \vdots & \vdots & \ddots & \vdots\\
  \lambda_{t_0,a_{\overline{t_0}}}^{r} &  \lambda_{t_1,a_{\overline{t_1}}}^{r} & \cdots & \lambda_{t_{r+1},a_{\overline{t_{r+1}}}}^{r} \\
      \frac{1}{\lambda_{t_0,a_{\overline{t_0}}}-d_1} &  \frac{1}{\lambda_{t_1,a_{\overline{t_1}}}-d_1} & \cdots & \frac{1}{\lambda_{t_{r+1},a_{\overline{t_{r+1}}}}-d_1}
         \end{array}\hspace{-1mm}\right)}_{B}\\
\nonumber &+ \det\left(\begin{array}{cccc}
             1 & 1 & \cdots & 1   \\
    \lambda_{j_0,a_{\overline{j_0}}} &  \lambda_{j_1,a_{\overline{j_1}}} & \cdots & \lambda_{j_r,a_{\overline{j_r}}} \\
    \vdots & \vdots & \ddots & \vdots\\
     \lambda_{j_0,a_{\overline{j_0}}}^{r-1} &  \lambda_{j_1,a_{\overline{j_1}}}^{r-1} & \cdots & \lambda_{j_r,a_{\overline{j_r}}}^{r-1} \\
   \frac{\delta_i}{\lambda_{j_0,a_{\overline{j_0}}}-d_1}     &   \frac{\delta_i}{\lambda_{j_1,a_{\overline{j_1}}}-d_1} & \cdots &  \frac{\delta_i}{\lambda_{j_r,a_{\overline{j_r}}}-d_1}
         \end{array}\right) \det\left(\begin{array}{cccc}
              1 & 1 & \cdots & 1   \\
    \lambda_{t_0,a_{\overline{t_0}}} &  \lambda_{t_1,a_{\overline{t_1}}} & \cdots & \lambda_{t_r,a_{\overline{t_r}}}\\
    \vdots & \vdots & \ddots & \vdots\\
  \lambda_{t_0,a_{\overline{t_0}}}^{r} &  \lambda_{t_1,a_{\overline{t_1}}}^{r} & \cdots & \lambda_{t_{r+1},a_{\overline{t_{r+1}}}}^{r} \\
     \frac{\theta_{k}}{\lambda_{t_0,a_{\overline{t_0}}}-d_0} &  \frac{\theta_{k}}{\lambda_{t_1,a_{\overline{t_1}}}-d_0} & \cdots & \frac{\theta_{k}}{\lambda_{t_{r+1},a_{\overline{t_{r+1}}}}-d_0}
         \end{array}\right)\\
         \ne &0.
\end{align}
\hrule
 \end{figure*}
Note that $\det(K_a)$ is an $\mathbf{F}_{q_0}$-linear combination of $\theta_i$ and $\theta_k$, the coefficient of $\theta_i$ arises from the second term, which is a nonzero element in $\mathbf{F}_{q_0}$ by Lemma \ref{Le_det_CV}, C2 and C4 because the matrices $A$ and $B$ in the second term are Cauchy-Vandermonde matrices (after permuting rows and transposing). By C3, this linear combination cannot be zero.
Therefore, $\mbox{blkdiag}(B_0, B_1, \ldots, B_{\ell-1})$ and thus $\hat{\mathcal{H}}$ is nonsingular, and the original file can be reconstructed.

\item The three extra node failures are in three distinct groups, say in groups $i$, $k$, and $l$, and assume that nodes $in+j_0,\ldots,in+j_{r}$,  nodes $kn+t_0,\ldots,kn+t_{r}$, and nodes $ln+u_0,\ldots,ln+u_{r}$  are failed, where $0\le i<k<l<\mu$, $0\le j_0<\cdots<j_{r}< n$, $0\le t_0<\cdots<t_{r}< n$,  and $0\le u_0<\cdots<u_{r}< n$. Let $J=\{j_0,\ldots,j_{r}\}$, $T=\{t_0,\ldots,t_{r}\}$, and $U=\{u_0,\ldots,u_{r}\}$, then the original file can be reconstructed if the matrix $\hat{\mathcal{H}}$ in \eqref{Eqn_hatH_C3} is nonsingular.
\begin{figure*}
\begin{align}\label{Eqn_hatH_C3}
\nonumber \hat{\mathcal{H}}&=\begin{pmatrix}
H_i[J]& &\\
& H_{k}[T]&\\
&& H_{l}[U]&\\
P_i[J]& P_{k}[T]& P_{l}[U]
\end{pmatrix}\\&=
\left(\begin{array}{cccccccccccc}
  I & I & \cdots & I  &&&&&&&& \\
    A_{j_0} & A_{j_1} & \cdots & A_{j_{r}}&&&&&&&&\\
    \vdots & \vdots & \ddots & \vdots&&&&&&&&\\
    A_{j_0}^{r-1} & A_{j_1}^{r-1} & \cdots & A_{j_{r}}^{r-1} &&&&&&&&\\
  &&&&  I & I & \cdots & I  &&&& \\
  &&&&  A_{t_0} & A_{t_1} & \cdots & A_{t_{r}}&&&&\\
  &&&&  \vdots & \vdots & \ddots & \vdots&&&&\\
   &&&& A_{t_0}^{r-1} & A_{t_1}^{r-1} & \cdots & A_{t_{r}}^{r-1}&&&& \\
   &&&& &&&&  I & I & \cdots & I  \\
&&&& &&&&  A_{u_0} & A_{u_1} & \cdots & A_{u_{r}}\\
&&&&  &&&&  \vdots & \vdots & \ddots & \vdots\\
&&&&   &&&& A_{u_0}^{r-1} & A_{u_1}^{r-1} & \cdots & A_{u_{r}}^{r-1} \\
   A_{j_0}^{r} & A_{j_1}^{r} & \cdots & A_{j_{r}}^{r} &  A_{t_0}^{r} & A_{t_1}^{r} & \cdots & A_{t_{r}}^{r} & A_{u_0}^{r} & A_{u_1}^{r} & \cdots & A_{u_{r}}^{r} \\
     \theta_i A'_{j_0} &\theta_i A'_{j_1} & \cdots &\theta_i A'_{j_{r}} & \theta_k A'_{t_0} &\theta_k A'_{t_1} & \cdots &\theta_k A'_{t_{r}}& \theta_l A'_{u_0} &\theta_l A'_{u_1} & \cdots &\theta_l A'_{u_{r}} \\
    \delta_i A''_{j_0} &\delta_i A''_{j_1} & \cdots &\delta_i A''_{j_{r}} & \delta_k A''_{t_0} &\delta_k A''_{t_1} & \cdots &\delta_k A''_{t_{r}}& \delta_k A''_{u_0} &\delta_k A''_{u_1} & \cdots &\delta_k A''_{u_{r}}
\end{array}\right).
\end{align}\hrule
 \end{figure*}

Similarly, by swapping the rows and columns of $\hat{\mathcal{H}}$, $\hat{\mathcal{H}}$ is equivalent to
\begin{equation*}
\mbox{blkdiag}(B_0, B_1, \ldots, B_{\ell-1})
\end{equation*}
under elementary transformation, where $B_a$ is formed by the $a, a+\ell, \ldots, (a+(3r+2)\ell)$-th rows and the $a, a+\ell, \ldots, (a+(3r+2)\ell)$-th columns of $\hat{\mathcal{H}}$ for $a\in [0, \ell)$, and can be expressed as in \eqref{Eqn_Ba_C3_3NF} according to \eqref{Eqn-row-of-A}.
\begin{figure*}
{\small
\begin{equation}\label{Eqn_Ba_C3_3NF}
B_a=\left(
\setlength{\arraycolsep}{1.8pt}
\begin{array}{cccccccccccc}
      1 & 1 & \cdots & 1  &&&&&&&& \\
    \lambda_{j_0,a_{\overline{j_0}}} &  \lambda_{j_1,a_{\overline{j_1}}} & \cdots & \lambda_{j_r,a_{\overline{j_r}}} &&&&&&&&\\
    \vdots & \vdots & \ddots & \vdots&&&&&&&&\\
     \lambda_{j_0,a_{\overline{j_0}}}^{r-1} &  \lambda_{j_1,a_{\overline{j_1}}}^{r-1} & \cdots & \lambda_{j_r,a_{\overline{j_r}}}^{r-1} &&&&&&&&\\
  &&&&  1 & 1 & \cdots & 1 &&&&  \\
  &&&&  \lambda_{t_0,a_{\overline{t_0}}} &  \lambda_{t_1,a_{\overline{t_1}}} & \cdots & \lambda_{t_{1},a_{\overline{t_{r}}}}&&&&\\
  &&&&  \vdots & \vdots & \ddots & \vdots&&&&\\
   &&&& \lambda_{t_0,a_{\overline{t_0}}}^{r-1} &  \lambda_{t_1,a_{\overline{t_1}}}^{r-1} & \cdots & \lambda_{t_{r},a_{\overline{t_{r}}}}^{r-1}&&&&\\
    &&&& &&&&   1 & 1 & \cdots & 1 \\
  &&&& &&&& \lambda_{u_0,a_{\overline{u_0}}} &  \lambda_{u_1,a_{\overline{u_1}}} & \cdots & \lambda_{u_{1},a_{\overline{u_{r}}}}\\
  &&&& &&&& \vdots & \vdots & \ddots & \vdots\\
   &&&&&&&& \lambda_{u_0,a_{\overline{u_0}}}^{r-1} &  \lambda_{u_1,a_{\overline{u_1}}}^{r-1} & \cdots & \lambda_{u_{r},a_{\overline{u_{r}}}}^{r-1}\\
   \lambda_{j_0,a_{\overline{j_0}}}^{r} &  \lambda_{j_1,a_{\overline{j_1}}}^{r} & \cdots & \lambda_{j_r,a_{\overline{j_r}}}^{r} &  \lambda_{t_0,a_{\overline{t_0}}}^{r} &  \lambda_{t_1,a_{\overline{t_1}}}^{r} & \cdots & \lambda_{t_{r},a_{\overline{t_{r}}}}^{r} &  \lambda_{u_0,a_{\overline{u_0}}}^{r} &  \lambda_{u_1,a_{\overline{u_1}}}^{r} & \cdots & \lambda_{u_{r},a_{\overline{u_{r}}}}^{r}\\
    \frac{\theta_i}{\lambda_{j_0,a_{\overline{j_0}}}-d_0}     &   \frac{\theta_i}{\lambda_{j_1,a_{\overline{j_1}}}-d_0} & \cdots &  \frac{\theta_i}{\lambda_{j_r,a_{\overline{j_r}}}-d_0}&  \frac{\theta_{k}}{\lambda_{t_0,a_{\overline{t_0}}}-d_0} &  \frac{\theta_{k}}{\lambda_{t_1,a_{\overline{t_1}}}-d_0} & \cdots & \frac{\theta_{k}}{\lambda_{t_{r},a_{\overline{t_{r}}}}-d_0}&  \frac{\theta_{l}}{\lambda_{u_0,a_{\overline{u_0}}}-d_0} &  \frac{\theta_{l}}{\lambda_{u_1,a_{\overline{u_1}}}-d_0} & \cdots & \frac{\theta_{l}}{\lambda_{u_{r},a_{\overline{u_{r}}}}-d_0}\\
    \frac{\delta_i}{\lambda_{j_0,a_{\overline{j_0}}}-d_1}     &   \frac{\delta_i}{\lambda_{j_1,a_{\overline{j_1}}}-d_1} & \cdots &  \frac{\delta_i}{\lambda_{j_r,a_{\overline{j_r}}}-d_1}&  \frac{\delta_{k}}{\lambda_{t_0,a_{\overline{t_0}}}-d_1} &  \frac{\delta_{k}}{\lambda_{t_1,a_{\overline{t_1}}}-d_1} & \cdots & \frac{\delta_{k}}{\lambda_{t_{r},a_{\overline{t_{r}}}}-d_1}&  \frac{\delta_{l}}{\lambda_{u_0,a_{\overline{u_0}}}-d_1} &  \frac{\delta_{l}}{\lambda_{u_1,a_{\overline{u_1}}}-d_1} & \cdots & \frac{\delta_{l}}{\lambda_{u_{r},a_{\overline{u_{r}}}}-d_1}
\end{array}
\hspace{-1mm}\right).
\end{equation}
\hrule
}
\end{figure*}

By Lemma \ref{Lemma-Det}, $\det(B_a)\ne 0$ is equivalent to the inequality in \eqref{Eqn_Ka_C3_3NF}
\begin{figure*}
{\small
\begin{align}\label{Eqn_Ka_C3_3NF}
\nonumber K_a&=\left|\begin{smallmatrix}
 \left|\begin{smallmatrix}
        1 & 1 & \cdots & 1   \\
    \lambda_{j_0,a_{\overline{j_0}}} &  \lambda_{j_1,a_{\overline{j_1}}} & \cdots & \lambda_{j_r,a_{\overline{j_r}}} \\
    \vdots & \vdots & \ddots & \vdots\\
     \lambda_{j_0,a_{\overline{j_0}}}^{r-1} &  \lambda_{j_1,a_{\overline{j_1}}}^{r-1} & \cdots & \lambda_{j_r,a_{\overline{j_r}}}^{r-1}\\
     \lambda_{j_0,a_{\overline{j_0}}}^{r} &  \lambda_{j_1,a_{\overline{j_1}}}^{r} & \cdots & \lambda_{j_r,a_{\overline{j_r}}}^{r}
 \end{smallmatrix}\right|    & \left|\begin{smallmatrix}
       1 & 1 & \cdots & 1   \\
 \lambda_{t_0,a_{\overline{t_0}}} &  \lambda_{t_1,a_{\overline{t_1}}} & \cdots & \lambda_{t_{1},a_{\overline{t_{r}}}}\\
    \vdots & \vdots & \ddots & \vdots\\
   \lambda_{t_0,a_{\overline{t_0}}}^{r-1} &  \lambda_{t_1,a_{\overline{t_1}}}^{r-1} & \cdots & \lambda_{t_{r},a_{\overline{t_{r}}}}^{r-1}\\
   \lambda_{t_0,a_{\overline{t_0}}}^{r} &  \lambda_{t_1,a_{\overline{t_1}}}^{r} & \cdots & \lambda_{t_{r},a_{\overline{t_{r}}}}^{r}
 \end{smallmatrix}\right|  & \left|\begin{smallmatrix}
     1 & 1 & \cdots & 1 \\
\lambda_{u_0,a_{\overline{u_0}}} &  \lambda_{u_1,a_{\overline{u_1}}} & \cdots & \lambda_{u_{1},a_{\overline{u_{r}}}}\\
\vdots & \vdots & \ddots & \vdots\\
 \lambda_{u_0,a_{\overline{u_0}}}^{r-1} &  \lambda_{u_1,a_{\overline{u_1}}}^{r-1} & \cdots & \lambda_{u_{r},a_{\overline{u_{r}}}}^{r-1}\\
 \lambda_{u_0,a_{\overline{u_0}}}^{r} &  \lambda_{u_1,a_{\overline{u_1}}}^{r} & \cdots & \lambda_{u_{r},a_{\overline{u_{r}}}}^{r}
 \end{smallmatrix}\right|\\
     \left|\begin{smallmatrix}
        1 & 1 & \cdots & 1   \\
    \lambda_{j_0,a_{\overline{j_0}}} &  \lambda_{j_1,a_{\overline{j_1}}} & \cdots & \lambda_{j_r,a_{\overline{j_r}}} \\
    \vdots & \vdots & \ddots & \vdots\\
     \lambda_{j_0,a_{\overline{j_0}}}^{r-1} &  \lambda_{j_1,a_{\overline{j_1}}}^{r-1} & \cdots & \lambda_{j_r,a_{\overline{j_r}}}^{r-1}\\
      \frac{\theta_i}{\lambda_{j_0,a_{\overline{j_0}}}-d_0}     &   \frac{\theta_i}{\lambda_{j_1,a_{\overline{j_1}}}-d_0} & \cdots &  \frac{\theta_i}{\lambda_{j_r,a_{\overline{j_r}}}-d_0}
 \end{smallmatrix}\right|    & \left|\begin{smallmatrix}
       1 & 1 & \cdots & 1   \\
 \lambda_{t_0,a_{\overline{t_0}}} &  \lambda_{t_1,a_{\overline{t_1}}} & \cdots & \lambda_{t_{1},a_{\overline{t_{r}}}}\\
    \vdots & \vdots & \ddots & \vdots\\
   \lambda_{t_0,a_{\overline{t_0}}}^{r-1} &  \lambda_{t_1,a_{\overline{t_1}}}^{r-1} & \cdots & \lambda_{t_{r},a_{\overline{t_{r}}}}^{r-1}\\
   \frac{\theta_{k}}{\lambda_{t_0,a_{\overline{t_0}}}-d_0} &  \frac{\theta_{k}}{\lambda_{t_1,a_{\overline{t_1}}}-d_0} & \cdots & \frac{\theta_{k}}{\lambda_{t_{r},a_{\overline{t_{r}}}}-d_0}
 \end{smallmatrix}\right|  & \left|\begin{smallmatrix}
     1 & 1 & \cdots & 1 \\
\lambda_{u_0,a_{\overline{u_0}}} &  \lambda_{u_1,a_{\overline{u_1}}} & \cdots & \lambda_{u_{1},a_{\overline{u_{r}}}}\\
\vdots & \vdots & \ddots & \vdots\\
 \lambda_{u_0,a_{\overline{u_0}}}^{r-1} &  \lambda_{u_1,a_{\overline{u_1}}}^{r-1} & \cdots & \lambda_{u_{r},a_{\overline{u_{r}}}}^{r-1}\\
 \frac{\theta_{l}}{\lambda_{u_0,a_{\overline{u_0}}}-d_0} &  \frac{\theta_{l}}{\lambda_{u_1,a_{\overline{u_1}}}-d_0} & \cdots & \frac{\theta_{l}}{\lambda_{u_{r},a_{\overline{u_{r}}}}-d_0}
 \end{smallmatrix}\right|\\
     \left|\begin{smallmatrix}
        1 & 1 & \cdots & 1   \\
    \lambda_{j_0,a_{\overline{j_0}}} &  \lambda_{j_1,a_{\overline{j_1}}} & \cdots & \lambda_{j_r,a_{\overline{j_r}}} \\
    \vdots & \vdots & \ddots & \vdots\\
     \lambda_{j_0,a_{\overline{j_0}}}^{r-1} &  \lambda_{j_1,a_{\overline{j_1}}}^{r-1} & \cdots & \lambda_{j_r,a_{\overline{j_r}}}^{r-1}\\
     \frac{\delta_i}{\lambda_{j_0,a_{\overline{j_0}}}-d_1}     &   \frac{\delta_i}{\lambda_{j_1,a_{\overline{j_1}}}-d_1} & \cdots &  \frac{\delta_i}{\lambda_{j_r,a_{\overline{j_r}}}-d_1}
 \end{smallmatrix}\right|    & \left|\begin{smallmatrix}
       1 & 1 & \cdots & 1   \\
 \lambda_{t_0,a_{\overline{t_0}}} &  \lambda_{t_1,a_{\overline{t_1}}} & \cdots & \lambda_{t_{1},a_{\overline{t_{r}}}}\\
    \vdots & \vdots & \ddots & \vdots\\
   \lambda_{t_0,a_{\overline{t_0}}}^{r-1} &  \lambda_{t_1,a_{\overline{t_1}}}^{r-1} & \cdots & \lambda_{t_{r},a_{\overline{t_{r}}}}^{r-1}\\
   \frac{\delta_{k}}{\lambda_{t_0,a_{\overline{t_0}}}-d_1} &  \frac{\delta_{k}}{\lambda_{t_1,a_{\overline{t_1}}}-d_1} & \cdots & \frac{\delta_{k}}{\lambda_{t_{r},a_{\overline{t_{r}}}}-d_1}
 \end{smallmatrix}\right|  & \left|\begin{smallmatrix}
     1 & 1 & \cdots & 1 \\
\lambda_{u_0,a_{\overline{u_0}}} &  \lambda_{u_1,a_{\overline{u_1}}} & \cdots & \lambda_{u_{1},a_{\overline{u_{r}}}}\\
\vdots & \vdots & \ddots & \vdots\\
 \lambda_{u_0,a_{\overline{u_0}}}^{r-1} &  \lambda_{u_1,a_{\overline{u_1}}}^{r-1} & \cdots & \lambda_{u_{r},a_{\overline{u_{r}}}}^{r-1}\\
 \frac{\delta_{l}}{\lambda_{u_0,a_{\overline{u_0}}}-d_1} &  \frac{\delta_{l}}{\lambda_{u_1,a_{\overline{u_1}}}-d_1} & \cdots & \frac{\delta_{l}}{\lambda_{u_{r},a_{\overline{u_{r}}}}-d_1}
 \end{smallmatrix}\right|
\end{smallmatrix}\right|\\
&\ne 0.
\end{align}}
\hrule
\end{figure*}
for all $a\in [0, \ell)$, where we also use $|A|$ to denote the determinant of the matrix $A$.
By Lemma \ref{Le_det_CV}, we further have
\begin{equation*}
K_a= D_0D_1D_2K'_a,
\end{equation*}
where 
\begin{equation*}
D_0=\prod_{0\le l<i\le r}^{r}(\lambda_{j_i,a_{\overline{j_i}}}-\lambda_{j_l,a_{\overline{j_l}}}),
\end{equation*}
\begin{equation*}
 D_1=\prod_{0\le l<i\le r}^{r}(\lambda_{t_i,a_{\overline{t_i}}}-\lambda_{t_l,a_{\overline{t_l}}}),
\end{equation*}
\begin{equation*}
D_2=\prod_{0\le l<i\le r}^{r}(\lambda_{u_i,a_{\overline{u_i}}}-\lambda_{u_l,a_{\overline{u_l}}}),
\end{equation*}
\begin{align*}
&K'_a\\=&\begin{psmallmatrix}\hspace{-2mm}
1     & 1 & 1\\
\frac{\theta_i}{\prod_{i=0}^{r}(\lambda_{j_i,a_{\overline{j_i}}}-d_0)}    & \frac{\theta_k}{\prod_{i=0}^{r}(\lambda_{t_i,a_{\overline{t_i}}}-d_0)}  & \frac{\theta_l}{\prod_{i=0}^{r}(\lambda_{u_i,a_{\overline{u_i}}}-d_0)}\\
\frac{\delta_i}{\prod_{i=0}^{r}(\lambda_{j_i,a_{\overline{j_i}}}-d_1)}    & \frac{\delta_k}{\prod_{i=0}^{r}(\lambda_{t_i,a_{\overline{t_i}}}-d_1)}  & \frac{\delta_l}{\prod_{i=0}^{r}(\lambda_{u_i,a_{\overline{u_i}}}-d_1)}
\hspace{-2mm}\end{psmallmatrix}.
\end{align*}
 By writing the Laplace expansion of the above determinant $K'_a$ over the second row, $K'_a$ can be expressed as a linear combination of $\theta_i$, $\theta_k$, and $\theta_l$.  The coefficient of $\theta_i$ is
\begin{align*}
 &\frac{1}{\prod_{i=0}^{r}(\lambda_{j_i,a_{\overline{j_i}}}-d_0)}\\
 \times &\left( \frac{\delta_k}{\prod_{i=0}^{r}(\lambda_{t_i,a_{\overline{t_i}}}-d_1)}  -\frac{\delta_l}{\prod_{i=0}^{r}(\lambda_{u_i,a_{\overline{u_i}}}-d_1)}\right),
\end{align*}
which is a  nonzero element in $\mathbf{F}_{q_0}$ because  $\frac{1}{\prod_{i=0}^{r}(\lambda_{t_i,a_{\overline{t_i}}}-d_1)}$ and $\frac{1}{\prod_{i=0}^{r}(\lambda_{u_i,a_{\overline{u_i}}}-d_1)}$ are in $G\subset \mathbf{F}_{q_0}$ while $\delta_k$ and $\delta_l$ in different cosets of $G$ in $\mathbf{F}_{q_0}\backslash\{0\}$, and $\prod_{i=0}^{r}(\lambda_{j_i,a_{\overline{j_i}}}-d_0)\in \mathbf{F}_{q_0}\backslash\{0\}$.
It can be similarly proved that the coefficients of $\theta_k$ and $\theta_l$ are also in $\mathbf{F}_{q_0}\backslash\{0\}$.
By C3, this linear combination, i.e., $K'_a$ cannot be zero. By C1, Lemma \ref{Eqn-Phi}, R1 and R2 of Lemma \ref{lem repair}, we have that $D_0D_1D_2\ne 0$. 
Therefore, $K_a\ne 0$ and $\det(B_a)\ne 0$, $\mbox{blkdiag}(B_0, B_1, \ldots, B_{\ell-1})$ and thus $\hat{\mathcal{H}}$ is nonsingular, and the original file can be reconstructed.
\end{itemize}

This completes the proof.
\end{proof}

\section{Conclusion}\label{sec:Conclu}
In this paper, we proposed two constructions of PMDS array codes with two global parities and with $(1+\epsilon)$-optimal repair bandwidth, the required finite fields, and sub-packetization levels are much smaller than the one in \cite{holzbaur2021partial}. The first one can support an arbitrary number of local parities and provide a tradeoff between the sub-packetization level and the repair bandwidth. In contrast,  the other one is limited to two local parities but has smaller rebuilding access and its sub-packetization level is only $2$. In addition, we presented an explicit PMDS array code with three global parities that has a smaller sub-packetization level as well as smaller repair bandwidth, the required finite field is also significantly smaller than the ones in \cite{holzbaur2021partial}.

\section*{Acknowledgment}
The authors would like to thank the Associate Editor
Prof. Itzhak
Tamo and the two anonymous reviewers for
their valuable suggestions and comments, which have greatly
improved the presentation and quality of this paper.

\bibliographystyle{ieeetr}
\bibliography{PMDS}
\begin{IEEEbiographynophoto}{Jie Li} (Member, IEEE) received the B.S. and M.S. degrees in mathematics from Hubei University, Wuhan, China, in 2009 and 2012, respectively, and received the Ph.D. degree from the department of communication engineering, Southwest Jiaotong University, Chengdu, China, in 2017.

From  2015 to   2016, he was a visiting Ph.D. student at the Department of Electrical Engineering and Computer Science, The University of Tennessee at Knoxville, TN, USA.  From   2017 to   2019, he was  a postdoctoral researcher at the Department of Mathematics, Hubei University, Wuhan, China. From 2019 to   2021, he was a postdoctoral researcher at the Department of Mathematics and Systems Analysis, Aalto University, Finland. He is currently a Senior Researcher with the Theory Lab, Central Research Institute, 2012 Labs, Huawei Technologies Co., Ltd., Hong Kong SAR, China. His research interests include coding for distributed storage and distributed computing, private information retrieval, and sequence design.

Dr. Li received the IEEE Jack Keil Wolf ISIT Student Paper Award and the  
Honor of Outstanding Graduates of Sichuan Province both in 2017.
\end{IEEEbiographynophoto}

\begin{IEEEbiographynophoto}{Xiaohu Tang} (Senior Member, IEEE)  received the B.S. degree in applied mathematics from
the Northwest Polytechnic University, Xi'an, China, the M.S. degree in applied
mathematics from the Sichuan University, Chengdu, China, and the Ph.D.
degree in electronic engineering from the Southwest Jiaotong University,
Chengdu, China, in 1992, 1995, and 2001 respectively.

From 2003 to 2004, he was a research associate in the Department of Electrical
and Electronic Engineering, Hong Kong University of Science and Technology.
From 2007 to 2008, he was a visiting professor at University of Ulm,
Germany. Since 2001, he has been in the School of Information Science and Technology,
Southwest Jiaotong University, where he is currently a professor. His research
interests include coding theory, network security, distributed storage and information processing for big data.

Dr. Tang was the recipient of the National excellent Doctoral Dissertation
award in 2003 (China), the Humboldt Research Fellowship in 2007
(Germany), and the Outstanding Young Scientist Award by NSFC in 2013
(China). He served as Associate Editors for several journals including \textit{IEEE
Transactions on Information Theory} and \textit{IEICE Transactions on
Fundamentals}, and served on a number of technical program committees of
conferences.
\end{IEEEbiographynophoto}

\begin{IEEEbiographynophoto}{Hanxu Hou} (Member, IEEE) received the B.Eng. degree in Information Security from Xidian University, Xian, China, in 2010, and Ph.D. degrees in the Dept. of Information Engineering from The Chinese University of Hong Kong in 2015 and in the School of Electronic and Computer Engineering from Peking University in 2016. He is currently with Dongguan University of Technology, and Part-time Researcher of the Theory Lab, Central Research Institute, 2012 Labs, Huawei Technologies Co., Ltd. He was a recipient of the 2020 Chinese Information Theory Young Rising Star Award by China Information Theory Society. He was recognized as an Exemplary Reviewer 2020 in IEEE Transactions on Communications. His research interests include erasure coding and coding for distributed storage systems. 
\end{IEEEbiographynophoto}

\begin{IEEEbiographynophoto}{Yunghsiang S.~Han} (Fellow, IEEE) was born in Taipei, Taiwan, in 1962. He received B.Sc. and M.Sc. degrees in electrical engineering from the National Tsing Hua University, Hsinchu, Taiwan, in 1984 and 1986, respectively, and a Ph.D. degree from the School of Computer and Information Science, Syracuse University, Syracuse, NY, in 1993. From 1986 to 1988, he was a lecturer at Ming-Hsin Engineering College, Hsinchu, Taiwan. He was a teaching assistant from 1989 to 1992 and a research associate in the School of Computer and Information Science at Syracuse University from 1992 to 1993. From 1993 to 1997, he was an Associate Professor in the Department of Electronic Engineering at Hua Fan College of Humanities and Technology, Taipei Hsien, Taiwan. He was with the Department of Computer Science and Information Engineering at National Chi Nan University, Nantou, Taiwan, from 1997 to 2004. He was promoted to Professor in 1998. He was a visiting scholar in the Department of Electrical Engineering at the University of Hawaii at Manoa, HI, from June to October 2001, the SUPRIA visiting research scholar in the Department of Electrical Engineering and Computer Science and CASE center at Syracuse University, NY from September 2002 to January 2004 and July 2012 to June 2013, and the visiting scholar in the Department of Electrical and Computer Engineering at University of Texas at Austin, TX from August 2008 to June 2009. He was with the Graduate Institute of Communication Engineering at National Taipei University, Taipei, Taiwan, from August 2004 to July 2010. From August 2010 to January 2017, he was with the Department of Electrical Engineering at the National Taiwan University of Science and Technology as Chair Professor. From February 2017 to February 2021, he was with the School of Electrical Engineering \& Intelligentization at Dongguan University of Technology, China. Now he is with the Shenzhen Institute for Advanced Study, University of Electronic Science and Technology of China. He is also a Chair Professor at National Taipei University since February 2015. His research interests are in error-control coding, wireless networks, and security.

Dr. Han was a winner of the 1994 Syracuse University Doctoral Prize and a Fellow of IEEE. One of his papers won the prestigious 2013 ACM CCS Test-of-Time Award in cybersecurity.

\end{IEEEbiographynophoto}

\begin{IEEEbiographynophoto}{Bo Bai} (Senior Member, IEEE) received the B.S. 
degree (Hons.) from the School of Communication Engineering, Xidian University, Xi’an, China, 
in 2004, and the Ph.D. degree from the Department
of Electronic Engineering, Tsinghua University, Beijing, China, in 2010. 

He was a Research Assistant and a Research 
Associate with the Department of Electronic and 
Computer Engineering, The Hong Kong University of Science and Technology, from April 2009 to 
September 2010 and October 2010 to April 2012,  
respectively. From July 2012 to January 2017, he was an Assistant Professor  
with the Department of Electronic Engineering, Tsinghua University. He has  
obtained the support from the Backbone Talents Supporting Project of  
Tsinghua University. Currently, he is an Information Theory Scientist and the  
Director of the Theory Laboratory, 2012 Labs, Huawei Technologies Company  
Ltd., Hong Kong. He has authored more than 130 papers in major IEEE and  
ACM journals and conferences, two book chapters, and one textbook. His 
research interests include semantic information theory, B5G/6G networking,  
and graph informatics. He served as a member for the IEEE ComSoc WTC  
Committee and the IEEE ComSoc SPCE Committee. He also served as a  
TPC Member for several IEEE conferences, such as ICC, Globecom, WCNC,  
VTC, and ICCC. He was a recipient of the Student Travel Grant at IEEE  
Globecom 2009 and the Best Paper Award at the IEEE ICC 2016. He was  
invited as a Young Scientist Speaker at the IEEE TTM 2011. He received the  
Honor of Outstanding Graduates of Shaanxi Province and the Honor of Young 
Academic Talent of Electronic Engineering in Tsinghua University. He is one  
of the Founding Vice Chair of the IEEE TCCN SIG on Social Behavior  
Driven Cognitive Radio Networks. He served as the TPC Co-Chair for the 
IEEE Infocom 2018—1st AoI Workshop and the IEEE Infocom 2019—2nd  
AoI Workshop, the TPC Co-Chair for the IEEE ICCC 2018, and an Industrial  
Forum and the Exhibition Co-Chair for the IEEE HotICN 2018. He also served  
as a reviewer for several major IEEE and ACM journals and conferences.
\end{IEEEbiographynophoto}

\begin{IEEEbiographynophoto}{Gong Zhang} is the Principal Researcher of future
network architecture with Huawei 2012 Labs. He has
more than 18 years of research experience on system
architect, including networking, distributed systems,
and communication systems. He has contributed
more than 90 patents globally. Previously as a Senior
Researcher, he led future internet and cooperative
communication research, and did mobility research
program since 2005. In 2009, he became the Principal Researcher, being in charge of Advance Network
Technology Research Department. He led lots of
research directions like future networks, distributed computing, database
systems, and data analysis. He proposed stream-based research systems in
networks to maintain the networks and find added value for the carriers.
Since 2012, he has been the Principal Researcher leading the System Group in
data mining and machine learning. His major research directions are network
architecture and large scale distributed systems.
\end{IEEEbiographynophoto}

\end{document}